\def\ARXIV{}
\theoremstyle{definition}
\newtheorem{remark}{Remark}[section]
\newtheorem{corollary}{Corollary}[section]
\newtheorem{theorem}{Theorem}[section]
\newtheorem{lemma}{Lemma}[section]
\newtheorem{proposition}{Proposition}[section]
\newtheorem{example}{Example}[section]
\newcommand{\ins}[1]{\langle #1 \rangle}
\newcommand{\ceil}[1]{\left\lceil #1 \right\rceil}
\newcommand{\floor}[1]{\left\lfloor #1 \right\rfloor}
\definecolor{ForestGreen}{rgb}{.13,.54,.13}
\Crefname{remark}{Remark}{Remarks}
\Crefname{rmk}{Remark}{Remarks}
\Crefname{dfn}{Definition}{Definitions}
\Crefname{thm}{Theorem}{Theorems}
\Crefname{cor}{Corollary}{Corollaries}
\Crefname{lem}{Lemma}{Lemmas}
\Crefname{examplex}{Example}{Examples}
\Crefname{prop}{Proposition}{Propositions}
\newcommand*\circled[1]{\tikz[baseline=(char.base)]{
            \node[shape=circle,draw,inner sep=1pt, minimum size=1.5em] (char) {#1};}}
\newcommand{\MMS}[2]{\text{MMS}^{#1}_{#2}}
\title[MMS for Chores]{Ordinal Maximin Share Approximation for Chores}
\author{Hadi Hosseini}
\affiliation{
  \institution{Pennsylvania State University}
   \country{}
  }
\email{hadi@psu.edu}
\author{Andrew Searns}
\affiliation{
  \institution{Johns Hopkins University Applied Physics Laboratory}
   \country{}
  }
\email{andrew.searns@jhuapl.edu}
\author{Erel Segal-Halevi}
\affiliation{
  \institution{Ariel University}
   \country{}
  }
\email{erelsgl@gmail.com}
\begin{abstract}
We study the problem of fairly allocating a set of $m$ indivisible chores (items with non-positive value) to $n$ agents. We consider the desirable fairness notion of $1$-out-of-$d$ maximin share (MMS)---the minimum value that an agent can guarantee by partitioning items into $d$ bundles and receiving the least valued bundle---and focus on ordinal approximation of MMS that aims at finding the largest $d\leq n $ for which $1$-out-of-$d$ MMS allocation exists.
Our main contribution is a polynomial-time algorithm for $1$-out-of-$\lfloor\frac{2n}{3}\rfloor$ MMS allocation, and a proof of existence of $1$-out-of-$\lfloor\frac{3n}{4}\rfloor$ MMS allocation of chores. 
Furthermore, we show how to use recently-developed algorithms for bin-packing to approximate the latter bound up to a logarithmic factor in polynomial time.
%


\end{abstract}
\keywords{Fair Division, Maximin Share Guarantee, Resource Allocation}
\newcommand{\BibTeX}{\rm B\kern-.05em{\sc i\kern-.025em b}\kern-.08em\TeX}
\begin{document}


\pagestyle{fancy}
\fancyhead{}


\maketitle 


\section{Introduction}
Fairness is one of the most fundamental requirements in many multiagent systems. Fair division, in particular, deals with allocation of resources and alternatives in a fair manner by cutting across a variety of fields including computer science, economics, and artificial intelligence.
Traditionally, fair division has been concerned with the allocation of \textit{goods} that are positively valued by agents, leading to a plethora of fairness notions, axiomatic results, and computational studies (see \cite{brandt2016handbook} and \cite{doi:10.1146/annurev-economics-080218-025559} for detailed discussions).
However, many practical problems require the distribution of a set of negatively valued items (aka \textit{chores}). These problems range from assigning household chores or distributing cumbersome tasks to those involving collective ownership responsibility \cite{risse2008should} in human-induced factors such as climate change \cite{traxler2002fair}, nuclear waste management, or controlling gas emissions \cite{caney2009justice}.
The problem of allocating chores is crucially different from allocating goods both from axiomatic and computational perspectives. For instance, while goods are freely disposable, chores must be completely allocated.
These fundamental differences have motivated a large number of recent works in fair division of divisible \cite{bogomolnaia2019dividing,chaudhury2020dividing} and indivisible chores \citep{aziz2019fair,freeman2020equitable,ijcai2019-7,Aziz_Rauchecker_Schryen_Walsh_2017}.

When dealing with indivisible items, a compelling fairness notion is the Maximin Share (MMS) guarantee---proposed by \citet{budish2011combinatorial}---which is a generalization of the \textit{cut-and-choose protocol} to indivisible items \cite{brams1996fair}. 
An agent's $1$-out-of-$d$ maximin share value is the value that it can guarantee by partitioning $m$ items into $d$ bundles and receiving the least valued bundle. Unfortunately, the $1$-out-of-$n$ MMS allocations may neither exist for goods~\cite{kurokawa2018fair,feige2021tight} nor for chores~\cite{Aziz_Rauchecker_Schryen_Walsh_2017}. These non-existence results, along with computational intractability of computing such allocations, have motivated \textit{multiplicative} approximations of MMS wherein each agent receives an $\alpha \leq 1$ fraction of its $1$-out-of-$n$ MMS value when dealing with goods \citep{ghodsi2018fair,garg2020improved,garg2018approximating}, or $\alpha \geq 1$ approximation of its $1$-out-of-$n$ MMS value when dealing with chores \citep{Aziz_Rauchecker_Schryen_Walsh_2017,barman2017approximation,huang2021algorithmic}.

In this paper, we initiate the study of \textit{ordinal} MMS approximations for allocating chores. The goal is finding an integer $d \leq n$ for which $1$-out-of-$d$ MMS exists and can be computed efficiently. 
Recently, ordinal approximations of MMS for allocating `goods' have received particular attention as natural guarantees that provide a simple conceptual framework for justifying approximate decisions to participating agents: partition the items in a counterfactual world where there are $d \geq n$ agents available \cite{babaioff2019fair,babaioff2021competitive,segal2020competitive,hosseini2021mms,ElkindSeSu21}. 
Since these approximations rely on ordinal rankings of bundles, they are generally robust against slight changes in agent's valuation profiles compared to their multiplicative counterparts
(see \cref{app:robust} for an example and a detailed discussion).
Focusing on ordinal approximations, we discuss key technical differences between allocating goods and chores, and highlight practical computational contrasts between ordinal and multiplicative approximations of MMS.

\subsection{Contributions}
We make the following theoretical and algorithmic contributions.

\paragraph{\textbf{An algorithm for $1$-out-of-$\floor{\frac{2n}{3}}$ MMS}}
We show that heuristic techniques for allocating goods do not carry over to chores instances  (\cref{sec:reductions}), and develop other techniques to upper-bound the number of large chores (\cref{lem:counting}).
Using these techniques, we develop a greedy algorithm that achieves $1$-out-of-$\floor{\frac{2n}{3}}$ MMS approximation for chores (\cref{thm:Chores_TwoThirds}).
The algorithm runs in strongly-polynomial time: the number of operations required is polynomial in the number of agents and chores.

\paragraph{\textbf{Existence of $1$-out-of-$\floor{\frac{3n}{4}}$ MMS}}
 We show the existence of $1$-out-of-$\lfloor\frac{3n}{4}\rfloor$ MMS allocation of chores (\cref{thm:Chores_ThreeFourths}).
The main technical challenge is dealing with large chores that requires exact computation of MMS values, rendering our algorithmic approach intractable.
While our technique gives the best known ordinal approximation of MMS, it only provides a tight bound for small instances (\cref{prop:upper-bound-7}) but not necessarily for larger instances (\cref{prop:upper-bound-n}).

    
\paragraph{\textbf{Efficient approximation algorithm}}
We develop a practical algorithm for approximating
the $1$-out-of-$\lfloor\frac{3n}{4}\rfloor$ MMS bound for chores. More specifically, our algorithm guarantees $1$-out-of-$d$ MMS for $d = \floor{\floor{\frac{3n}{4}}-O(\log{n})}$ (\cref{thm:chores-approx}) and runs in time polynomial in the binary representation of the input.


\subsection{Related Work}

\subsubsection*{\textbf{MMS for allocating goods}}
The notion of maximin-share originated in the economics literature. \citet{budish2011combinatorial} showed a mechanism that guarantees $1$-out-of-$(n+1)$ MMS to all agents by adding a small number of excess goods. Whether or not $1$-out-of-$(n+1)$ MMS can be guaranteed without adding excess goods remains an open problem to date.

In the standard fair division settings, in which adding goods is impossible, the first non-trivial ordinal approximation was $1$-out-of-$(2n-2)$ MMS \citep{aigner2022envy}.
\citet{hosseini2021mms} studied the connection between guaranteeing 1-out-of-$n$ MMS for $2/3$ of the agents and the ordinal approximations for \textit{all} agents. The implication of their results is the existence of $1$-out-of-($\floor{3n/2}$) MMS allocations and a polynomial-time algorithm for $n<6$. Recently, a new algorithmic method has been proposed that achieves this bound for any number of agents \citep{hosseini2021ordinal}.
The ordinal approximations have been extended to $\ell$-out-of-$d$ MMS to guarantee that each agent receives at least as much as its worst $\ell$ bundles, where the goods were partitioned into $d$ bundles \citep{segal2019maximin,babaioff2019fair}.
The maximin share and its ordinal approximations have also been applied to some variants of the \emph{cake-cutting} problem \citep{bogomolnaia2020guarantees,ElkindSeSu21,ElkindSeSu21b,ElkindSeSu21c}.

The multiplicative approximation of MMS originated in the computer science literature \citep{procaccia2014fair}. 
These algorithms guarantee that each agent receives at least an $\alpha$ fraction of its maximin share threshold \citep{kurokawa2018fair,amanatidis2017approximation,garg2018approximating,ghodsi2018fair}. 
For goods, the best known existence result is $\alpha  \geq 3/4+1/(12n)$, and the best known polynomial-time algorithm 
guarantees $\alpha \geq 3/4$ \citep{garg2020improved}. The MMS bound was improved for special cases with only three agents \citep{amanatidis2017approximation}, and the best known approximation is $\alpha \geq 8/9$ \citep{gourves2019maximin}.

There are also MMS approximation algorithms for settings with constraints, such as when the goods are allocated on a cycle and each agent must get a connected bundle \citep{truszczynski2020maximin}.
\citet{mcglaughlin2020improving} showed an algorithm for approximating the maximum Nash welfare (the product of agents' utilities), which attains a fraction $1/(2n)$ of the MMS.
Recently, \citet{nguyen2017approximate} gave a Polynomial Time Approximation Scheme (PTAS) for a notion defined as \textit{optimal-MMS}, that is, the largest value, $\alpha$, for which each agent receives the value of $\alpha\cdot \text{MMS}_{i}$. Since the number of possible partitions is finite, an optimal-MMS allocation always exists, and it is an MMS allocation if $\alpha \geq 1$. However, an optimal-MMS allocation may provide an arbitrarily bad ordinal MMS guarantee \cite{Searns_Hosseini_2020,hosseini2021mms}.

\subsubsection*{\textbf{MMS for allocating chores}}
\citet{Aziz_Rauchecker_Schryen_Walsh_2017} initiated the study of MMS fairness for allocating indivisible chores. They proved that---similar to allocating goods---a $1$-out-of-$n$ MMS allocation may not always exist, and computing the MMS value for a single agent remains NP-hard.

In the maximin share allocation of chores, the multiplicative approximation factor is larger than $1$ (each agent might get a larger set of chores than its MMS value). The multiplicative factors in the literature have been improved from 2 \citep{Aziz_Rauchecker_Schryen_Walsh_2017} to 4/3 \citep{barman2017approximation} to 11/9 \citep{huang2021algorithmic}.
The best known polynomial-time algorithm guarantees a 5/4 factor \citep{huang2021algorithmic}.
\citet{aigner2022envy} prove the existence of a $1$-out-of-$\floor{2n/3}$ MMS allocation for chores, but their algorithm requires an exact computation of the MMS values, so it does not run in polynomial time.
Note that multiplicative and ordinal approximations do not imply one another---each of them might be better in some instances as we illustrate in the next example.

\begin{example}
\label{exm:ordinal-vs-cardinal}
Consider an instance with $n=3$ agents and $m$ identical chores of value $-1$. Then:
\begin{itemize}
\item If there are $m=2$ chores, then the $1$-out-of-$\floor{2n/3}$ MMS is $-1$, which is better than $11/9$ of the $1$-out-of-$n$ MMS.
\item If there are $m=3$ chores, then the $1$-out-of-$\floor{2n/3}$ MMS is $-2$, which is worse than $11/9$ of the $1$-out-of-$n$ MMS.
\end{itemize}
\end{example}
In Appendix \ref{app:relations} we generalize this example to any number of agents. Additionally, we study the relationships between the ordinal maximin share and other common fairness notions such as approximate-proportionality or approximate-envy-freeness. The bottom line is that all these notions are independent: none of them implies a meaningful approximation of the other.

The notion of maximin share fairness has been extended to \emph{asymmetric agents}, i.e. agents with different entitlements over chores \citep{aziz2019maxmin,ijcai2019-7}.
Recently, a variation of MMS has also been studied in conjunction with \emph{strategyproofness} that only elicits ordinal preferences as opposed to cardinal valuations \citep{ijcai2019-9,aziz2020approximate}.
In parallel, there are works studying other fairness notions for chores, or for combinations of goods and chores. Examples are 
approximate proportionality \citep{aziz2020polynomial},
approximate envy-freeness \citep{aziz2019fair},
approximate equitability \citep{freeman2020equitable},
and leximin \citep{chen2020fairness}. In the context of mixed items, however, no multiplicative approximation of MMS is guaranteed to exist \citep{kulkarni2021approximating}. In \cref{app:mixed} we show that similarly no ordinal MMS approximation is guaranteed to exist for mixed items.


\section{Preliminaries}

\paragraph{\textbf{Problem instance.}}
An instance of a fair division problem is denoted by $I = \ins{N, M, V}$ where $N = \{1,\ldots, n\}$ is a set of agents, $M = \{c_1, \ldots, c_m\}$ is a set of $m$ indivisible chores, and $V = (v_1, \ldots, v_n)$ is a valuation profile of agents. Agent $i$'s preferences over chores is specified by a valuation function $v_i: 2^M \to \mathbb{R}$.
We assume that the valuation functions are \textit{additive}; that is, for any agent $i\in N$, for each subset $S\subseteq M$, $v_{i}(S) = \sum_{c\in S} v_{i}(\{c\})$ where $v_i(\emptyset) = 0$. We assume items are chores for all agents, i.e., for each $i\in N$, for every $c\in M$ we have $v_i(\{c\}) \leq 0$.
For a single chore $c\in M$, we write $v_i(c)$ instead of $v_i(\{c\})$.
Without loss of generality, we assume that $m \geq n$ since otherwise we can add dummy chores that are valued $0$ by all agents.

\paragraph{\textbf{Allocation.}} An allocation $A = (A_1, \ldots, A_n)$ is an $n$-partition of the set of chores, $M$, where a bundle of chores $A_{i}$, possibly empty, is allocated to each agent $i\in N$. 
An allocation must be \textit{complete}: $\cup_{i\in N} A_{i} = M$.

\paragraph{\textbf{Maximin share.}}
Let $d \leq n$ be an integer and $\Pi_{d}(M)$ denote the set of $d$-partitions of $M$.
For each agent $i\in N$, the \textbf{$1$-out-of-$d$ Maximin Share} of $i$ on $M$, denoted $\text{MMS}_{i}^{d}(M)$, is defined as
$$
\text{MMS}_{i}^{d}(M) = \max_{(A_{1}, A_{2}, \ldots A_{d})\in \Pi_{d}(M)}\min_{j \in [d]} v_{i}(A_{j}),
$$
where $[d] = \{1,\ldots, d\}$. Intuitively, this is the maximum value that can be guaranteed if agent $i$ partitions the items into $d$ bundles and chooses the least valued bundle. 
When it is clear from the context, we write $\text{MMS}_{i}^{d}$ or $1$-out-of-$d$ MMS to refer to $\text{MMS}_{i}^{d}(M)$.

Given an instance, we say that a $1$-out-of-$d$ MMS exists if there exists an allocation  $A = (A_1, \ldots, A_n) \in \Pi_{n}(M)$ such that for every agent $i\in N$, $v_{i}(A_i) \geq \text{MMS}_{i}^{d}(M)$.
Note that $\text{MMS}_{i}^{d}(M) \leq \frac{v_i(M)}{d}$ and it is a weakly-increasing function of $d$: a larger $d$ value means that there are more agents to share the burden, so each agent potentially has fewer chores to do. Clearly, $\text{MMS}_{i}^{d} = \frac{v_i(M)}{d}$ when chores can be partitioned into $d$ bundles of equal value. Moreover,  $\frac{v_i(M)}{n}$ is agent $i$'s \textit{proportional share}.

\paragraph{\textbf{Ordered instance.}} An instance $I$ is \textit{ordered} when all agents agree on the linear ordering of the items, irrespective of their valuations. Formally, $I$ is an \emph{ordered instance} if there exists an ordering $(c_{1}, c_{2}, \ldots, c_{m})$ such that for all agents $i\in N$ we have 
$|v_{i}(c_{1})| \geq |v_{i}(c_{2})| \geq \ldots \geq |v_{i}(c_{m})|$. 
Throughout this paper, we often refer to this as an ordering from the \textit{largest chores} (least preferred) to the \textit{smallest chores} (most preferred).

In the context of allocating goods, \citet{Bouveret2016} introduced ordered instances as the `most challenging' instances in achieving MMS, and showed that given an \textit{un}ordered instance, it is always possible to generate a corresponding ordered instance in polynomial time.\footnote{\citet{Bouveret2016} called these same-order preferences.}
More importantly, if an ordered instance admits an MMS allocation, the original instance also admits an MMS allocation which can be computed in polynomial time (see \cref{ex:ordering}).


\begin{lemma}[\citet{barman2017approximation}]\label{lem:order}
Let $I' = \ins{N, M, V'}$ be an ordered instance constructed from the original instance $I = \ins{N, M, V}$. 
Given allocation $A'$ on $I'$, a corresponding allocation $A$ on $I$ can be computed in polynomial time such that for all $i\in N, v_{i}(A_{i}) \geq v'_{i}(A'_{i})$.
\end{lemma}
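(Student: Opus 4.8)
The plan is to prove the lemma constructively: from the allocation $A'$ of the ordered instance $I'$, I would build the allocation $A$ of $I$ by a greedy matching between the real chores of $M$ and the ``slots'' of $A'$, where a slot is a position $c_t$ in the common ordering together with the agent $i$ who owns it in $A'$ (since $A'$ partitions $M$, every position is owned by exactly one agent). The reduction I aim for is from a per-bundle guarantee to a per-slot guarantee: if I can match each slot $c_t$ owned by agent $i$ to a distinct real chore $c$ with $v_i(c) \ge v'_i(c_t)$, then summing over the slots of $A'_i$ and using additivity of $v_i$ gives $v_i(A_i) \ge v'_i(A'_i)$ for every $i$, which is exactly the claim.

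First I would make the construction of $I'$ explicit: for each agent $i$ independently, sort the magnitudes $|v_i(c_1)|, \ldots, |v_i(c_m)|$ in weakly decreasing order and let $v'_i(c_t)$ be the $t$-th largest of these, where the position order $c_1, \ldots, c_m$ is common to all agents but the sorted values are per-agent. Thus $v'_i(c_t)$ equals agent $i$'s $t$-th most negative true value, and the crucial consequence I would extract is that the number of real chores $c$ with $v_i(c) \ge v'_i(c_t)$ is at least $m - t + 1$ (namely those ranked $t, t+1, \ldots, m$ in $i$'s own order of increasing magnitude).

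The greedy I would run processes the slots from $t = m$ down to $t = 1$, i.e.\ handling the \emph{most restrictive} slots first; this is the mirror image of the classical goods argument, where one processes the most valuable positions first. At the step handling slot $c_t$, let $i$ be its owner; I assign to $i$ the real chore that $i$ values least negatively among those not yet assigned, and delete it from the pool. The invariant to verify is that this chore always satisfies $v_i(c) \ge v'_i(c_t)$: when slot $c_t$ is processed, exactly $m - t$ chores have already been removed, whereas at least $m - t + 1$ chores are acceptable to $i$ (have value $\ge v'_i(c_t)$), so at least one acceptable chore survives and the maximizer over the surviving pool is acceptable. This produces a distinct real chore for each slot, hence a complete allocation, and runs in time polynomial in $n$ and $m$ (a linear number of steps, each a maximization over the remaining chores).

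The step I expect to be the main obstacle is pinning down the correct processing order and the accompanying count for chores. For goods the binding constraints sit at the top of each agent's ranking, so one processes the largest items first; for chores the tight constraints instead sit at the least significant chores, so processing from $t = m$ downward (rather than $t = 1$ upward) is essential, and I would have to argue carefully that the ``at most $m - t$ removed versus at least $m - t + 1$ acceptable'' inequality is precisely what makes a suitable chore survive at every step. As a fallback in case the greedy bookkeeping proves delicate, I would phrase the existence of the slot-to-chore matching through Hall's condition and verify the Hall inequality directly from the threshold structure, though I expect the greedy argument to be cleaner and to yield the polynomial-time bound immediately.
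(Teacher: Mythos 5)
Your proof is correct and is essentially the standard picking-sequence argument that the paper relies on (it cites \citet{barman2017approximation} for the proof and illustrates exactly this construction in its ordering example): process the positions of the ordered instance from the most-preferred slot backwards, letting the owner of each slot take its best remaining real chore, and use the count ``at most $m-t$ removed versus at least $m-t+1$ acceptable'' to conclude $v_i(A_i)\geq v'_i(A'_i)$ by additivity. The only nitpick is the phrase ``increasing magnitude'' where you mean decreasing magnitude when identifying the $m-t+1$ acceptable chores; the counting itself is right.
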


The above results hold for any MMS approximation without loss of generality, and have been adopted extensively in simplifying the MMS approximations of chores \cite{huang2021algorithmic}. Therefore, throughout the paper we only focus on ordered instances.

\begin{example}[Ordering an instance]\label{ex:ordering}
Consider the following unordered instance with four chores and two agents:

\begin{center}\small
\begin{tabular}{c| c c c c | c | c}
               & $c_{1}$   & $c_{2}$   & $c_{3}$   & $c_{4}$      & $\text{MMS}^{n}_{i}$ & $v_i(A_i)$  \\\hline
        $a_{1}$ & -3         & \circled{-5}         & -6         & \circled{-1}         & -8 & -6 \\ 
        $a_{2}$ & \circled{-2}       & -8         & \circled{-4}          & {-9}          & -12  & -6 \\
    \end{tabular}
\end{center}

An ordered instance is obtained by sorting the values in descending order of absolute values. It has two possible allocations marked by a circle and $^{*}$ that satisfy MMS:

\begin{center}\small
\begin{tabular}{c| c c c c | c}
                & $c'_{1}$   & $c'_{2}$   & $c'_{3}$   & $c'_{4}$      & $\text{MMS}^{n}_{i}$ \\\hline
        $a_{1}$ & -6$^{*}$            & \circled{-5}           &    \circled{-3}    &    -1$^{*}$        & -8 \\ 
        $a_{2}$ & \circled{-9}             & -8$^{*}$          &    {-4}$^{*}$     &    \circled{-2}     & -12 \\
    \end{tabular}
\end{center}
Any of the marked MMS allocations in the ordered instance corresponds to a picking-sequence that results in an MMS allocation in the original instance. 
A picking sequence lets agents select items from the `best chores' (most preferred) to the `worst chores' (least preferred).

For instance, applying a picking sequence 2, 1, 1, 2 (obtained from the circled allocation in the second table) to the original instance results in allocation $A$ (marked by circles in the first table) that guarantees MMS.
Specifically, when applied to the original instance, agent 2 picks first, and takes its highest valued chore $c_1$, which corresponds to $c'_4$. 
Agent 1 picks next. Since its best chore $c_4$ is available he picks it. 
The next pick also belongs to agent 1. But his second-best chore is $c_1$, which is already allocated to agent 2. Thus, agent 1 picks its next-best available chore $c_2$, and agent 2 is left with $c_3$.

\end{example}


\section{Valid Reductions for Chores} \label{sec:reductions}
In this section, we first show that the valid reductions techniques that are typically used for allocating goods can no longer be applied to chores instances.
While typical goods reductions fail in allocating chores, we then argue that some of the core ideas translate to chores allocation through careful adaptations. 
These techniques are of independent interest as they can be utilized in other heuristic algorithms (e.g. multiplicative MMS approximations).


\subsection{Reductions for goods}

Several algorithms that are developed to provide multiplicative MMS approximations rely on structural properties of MMS and heuristic techniques to avoid computational barriers of computing MMS thresholds.
To understand common reduction techniques, we first take a detour to recall techniques that are valid when allocating goods.
For the ease of exposition, we present this section with the standard definition of $1$-out-of-$n$ MMS. 






\begin{definition}[Valid Reduction for Goods]
Given an instance, $I = \ins{N, M, V}$ and a positive integer $n$, allocating a set of goods $A_i \subseteq M$ to an agent $i\in N$ is a \emph{valid reduction} if 

(i) $v_i(A_i) \geq \text{MMS}_{i}^{n}(M)$, and

(ii) $\forall j \in N \setminus \{i\}, \text{MMS}_{j}^{n-1}(M \setminus A_i) \geq \text{MMS}_{j}^{n}(M)$.
\end{definition}

Intuitively, a valid reduction ensures that the MMS values of the remaining agents in the reduced instance does not strictly decrease; otherwise, solving the reduced instance may violate the initial MMS values of agents. 

Since computing MMS values is NP-hard \citep{Bouveret2016}, one can instead utilize proportionality as a (loose) upper bound for MMS values. 
Given the proportionality bound, it is easy to see that for each agent $i\in N$, $\text{MMS}_{i}^{n}(M) \leq \frac{v_i(M)}{n}$. Therefore, any good $g\in M$ with a value $v_{i}(g) \geq \frac{v_i(M)}{n}$ for agent $i$ can be  assigned to agent $i$, satisfying $i$'s MMS value, without violating conditions of valid reductions.
The next lemma (due to \citet{garg2018approximating}) formalizes this observation and provides two simple reduction techniques.


\begin{lemma}[\citet{garg2018approximating}]\label{lem:n_goods_reduction}
Given an ordered goods instance $I = \ins{N, M, V}$ with $|N|=n$, if $v_{i}(\{g_{n},g_{n+1}\}) \geq \frac{v_i(M)}{n}$, then allocating $A_{i} = \{g_{n}, g_{n+1}\}$ to agent $i$ 
(and removing them from the instance) forms a valid reduction. Similarly, allocating $\{g_{1}\}$ to agent $i$ forms a valid reduction if $v_{i}(\{g_{1}\}) \geq \frac{v_i(M)}{n}$.
\end{lemma}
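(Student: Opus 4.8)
The plan is to verify the two conditions of a \emph{valid reduction} separately, and for both reductions the real work lies in condition (ii) (the guarantee for the other agents), since condition (i) is immediate. For (i): by the proportionality upper bound $\text{MMS}_i^n(M) \le \frac{v_i(M)}{n}$ noted earlier, the hypothesis $v_i(A_i) \ge \frac{v_i(M)}{n}$ (with $A_i=\{g_n,g_{n+1}\}$ in the first case and $A_i=\{g_1\}$ in the second) immediately yields $v_i(A_i) \ge \text{MMS}_i^n(M)$. So from here I fix an arbitrary agent $j \neq i$, write $\mu = \text{MMS}_j^n(M)$, and fix an $n$-partition $(P_1,\dots,P_n)$ of $M$ witnessing it, so that $v_j(P_\ell) \ge \mu$ for every $\ell$. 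The goal in each case is to produce an $(n-1)$-partition of the reduced set with every bundle of value $\ge \mu$.

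For the single good $A_i=\{g_1\}$ I would let $P_k$ be the bundle containing $g_1$ and delete $P_k$ in its entirety. The remaining $n-1$ bundles form an $(n-1)$-partition of $M \setminus P_k$, each of value $\ge \mu$. Since $M \setminus P_k \subseteq M \setminus \{g_1\}$, I then re-insert the leftover goods $P_k \setminus \{g_1\}$ into these bundles arbitrarily; because valuations are additive and non-negative for goods, this can only raise bundle values, so $\text{MMS}_j^{n-1}(M\setminus\{g_1\}) \ge \mu$, which is condition (ii).

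The pair $\{g_n,g_{n+1}\}$ is the crux, and here deleting whole bundles is too wasteful (I may drop the bundle count only by one while removing two goods), so I would use a pigeonhole argument. The $n+1$ largest goods $g_1,\dots,g_{n+1}$ occupy only $n$ bundles, so some bundle $P_t$ contains at least two of them, say $g_p$ and $g_q$ with $p<q\le n+1$; consequently $v_j(g_p) \ge v_j(g_n)$ and $v_j(g_q) \ge v_j(g_{n+1})$. I delete $P_t$, leaving an $(n-1)$-partition $(R_1,\dots,R_{n-1})$ of $M \setminus P_t$ with every $v_j(R_\ell) \ge \mu$. This partitions the wrong set: it still contains whichever of $g_n,g_{n+1}$ lay outside $P_t$, and it omits the goods of $P_t$. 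To correct it into a partition of $M\setminus\{g_n,g_{n+1}\}$, I would remove those stray copies of $g_n,g_{n+1}$ and re-insert the now-freed goods $P_t\setminus\{g_n,g_{n+1}\}$ (which include $g_p$ and/or $g_q$). The key point is that each removed good can be matched to a weakly larger reinserted one, $g_n$ to $g_p$ and $g_{n+1}$ to $g_q$, so no bundle loses value and the remaining goods of $P_t$ are pure bonus. Every bundle thus still has value $\ge \mu$, giving $\text{MMS}_j^{n-1}(M\setminus\{g_n,g_{n+1}\}) \ge \mu$.

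I expect the main obstacle to be resisting the tempting but wrong shortcut of merging the two bundles that contain $g_n$ and $g_{n+1}$: the merged bundle has value $v_j(P_a)+v_j(P_b)-v_j(g_n)-v_j(g_{n+1})$, which one can lower-bound only by $2\mu - v_j(g_n) - v_j(g_{n+1})$, and this exceeds $\mu$ only when $\mu \ge v_j(g_n)+v_j(g_{n+1})$ --- an inequality that genuinely fails (for instance two large and one tiny good with $n=2$). The pigeonhole/replacement step is precisely what sidesteps this, and the only care needed is the bookkeeping of the overlap cases where one or both of $g_n,g_{n+1}$ already lie in $P_t$, which only make the matching easier.
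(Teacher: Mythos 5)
Your proof is correct. Note that the paper itself gives no proof of this lemma---it is stated as a known result cited from the literature---so there is nothing internal to compare against; your argument (proportionality bound for condition (i), then for condition (ii) a pigeonhole step locating two of $g_1,\dots,g_{n+1}$ in one MMS bundle of agent $j$, deleting that bundle, and repairing the partition by matching each stray removed good $g_n,g_{n+1}$ to a weakly larger reinserted good $g_p,g_q$) is exactly the standard exchange argument behind the cited result, and your handling of the overlap cases and of the tempting-but-invalid bundle-merging shortcut is sound.
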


The following example illustrates how valid reductions can be iteratively applied to reduce an ordered instance.

\begin{example}[Valid reductions for goods]
Consider five goods and three agents with valuations as shown in the table below.
\begin{center}\small
\begin{tabular}{c| c c c c c | c}
                & $g_{1}$   & $g_{2}$   & $g_{3}$   & $g_{4}$   & $g_{5}$   & $\text{MMS}_{i}^{3}$ \\\hline
        $a_{1}$ & {9}         & {6}         & \circled{5}         & \circled{3}         & {1}         & 7 \\ 
        $a_{2}$ & 8         & \circled{7}         & {6}         & {2}         & \circled{1}         & 8 \\
        $a_{3}$ & \circled{10}        & 8         & 5         & 3         & 1         & 8 \\
    \end{tabular}
\end{center}
The MMS values of all three agents are shown in the table. Suppose $g_1$ is allocated to agent $a_3$. This allocation is a valid reduction because $v_{3}(g_1) \geq MMS^{3}_{3}(M)$. After this reduction, the MMS values for the remaining agents are $\MMS{2}{1}(M\setminus \{g_1\}) = 7$ and $\MMS{2}{2}(M\setminus \{g_1\}) = 8$ respectively. At this point, the set $\{g_3, g_4\}$ can be given to agent $a_1$ as a valid reduction since $g_3$ and $g_4$ are precisely $n$th and $(n+1)$th highest valued goods according to $a_1$ in the reduced instance (note that $n = 2$ after the removal of $a_3$).
\end{example}

\begin{remark}
When allocating goods, valid reduction techniques are often used together with scaling of an instance to simplify the approximation algorithms \cite{garg2020improved,garg2018approximating}.
The \textbf{scale invariance} property of MMS \citep{ghodsi2018fair} states that if an agent's valuations are scaled by a factor, then its MMS value scales by the same factor. 
Formally, given an instance $I = \ins{N, M, V}$, for every agent $i\in N$ with a proportionality bound $\frac{v_{i}(M)}{n}$ we can construct a new instance $I' = \ins{N, M, V'}$ such that $v'_{i}(M) = n$ and for every $g\in M$, $v'_{i}(g) = \frac{n}{v_{i}(M)} v_{i}(g)$.
Using the proportionality bound for scaling an instance implies that allocating any set $S\in M$ such that $v_{i}(S) \geq 1$ to agent $i$ forms a valid reduction.
\end{remark}

The scale invariance property of MMS and reduction techniques circumvent the exact computation of MMS thresholds, which enables greedy approximation algorithms for allocating goods. \citet{garg2018approximating} developed a simple greedy algorithm that guarantees to each agent $2/3$ of its MMS value; later algorithms improved this approximation to $3/4$ \citep{garg2020improved,ghodsi2018fair}.

\subsection{Failure of Goods Reductions} \label{sec:failgoodsreduction}
We briefly discuss how the valid reductions for goods do not translate to instances with chores. The reason is that the reductions for goods rely upon the fact that, redistributing items from one bundle of a partition to other bundles weakly increases the value of other bundles. However, in the context of chores, this assumption does not hold as we illustrate next.

\begin{example}
Consider three agents and six chores. Agents' valuations are identical such that each agent $i\in N$ values each chore $c\in M$ as $v_{i}(c) = -1$.
%
%
The 1-out-of-$3$ MMS of all agents is $-2$, i.e. $\MMS{3}{i} = -2$ for every $i\in N$. A reduction that allocates a single chore (e.g. largest chore), say $c_1$, satisfies agent 1 since $v_1(c_1) = -1 \geq \MMS{3}{1}$. However, this reduction is \textit{not} valid since the MMS value of the remaining agents decreases, that is, $\MMS{2}{i} = -3$ for $i\in\{2,3\}$.

\end{example}

To illustrate why reductions of larger bundles such as $\{c_n, c_{n+1}\}$ fail, we provide the following example that generalizes this reduction to  bundles with larger sizes.

\begin{example}
Consider an instance with three agents and $3(k+2)$ chores that are each valued $-1$.
Each agent's MMS value is $\MMS{3}{i} = -(k+2)$. Take any bundle $S \subset M$ of $k+1$ chores. Any agent $i$ would agree to receive $S$, as $v_{i}(S) = -(k+1) \geq \MMS{3}{i} = -(k+2)$. 
However, allocating the bundle $S$ to agent $i$ is not a valid reduction. This is because the remaining $2k+5$ chores must be allocated among the remaining two agents, but $\MMS{2}{j}(M\setminus S) = -(k+3)$ which is less than $\MMS{3}{j} = -(k+2)$. 

Notice that smaller bundles of $v_{i}(\{c_n, c_{n+1}\}) = -2$ do satisfy agent $i$ as well but still result in decrease of MMS values for other agents. For example, when $k = 2$, if $\{c_3, c_4\}$ are allocated to an agent, the MMS values of the remaining agents decrease from $\MMS{3}{i} = -4$ to $\MMS{2}{i} = -5$.
\end{example}

\subsection{Estimating the Number of Large Chores}
One of the key distinctions between allocating goods and chores is the tolerance of bounds used for approximating MMS values. 
As we discussed previously, proportionality provides a reasonable upper bound in allocating goods through reductions: as soon as the value of a bundle reaches an agent's proportionality threshold, a reduction can be applied without including any additional item.

In contrast, when allocating chores, proportionality may be a loose bound: when selecting a set of chores that satisfies proportionality for an agent, it may still be necessary to include additional chores to ensure that no chore remains unallocated.

\begin{example}
Consider an instance with 10 chores and 10 agents with identical valuations: three small chores valued at $-\frac{1}{3}$, six medium chores valued at $-\frac{1}{2}$, and one large chore valued at $-1$. 
The proportionality threshold is $-\frac{1}{2}$ but the MMS is $-1$.
Once an agent reaches the proportionality threshold, say by receiving a single medium chore, it could still receive an additional medium or small chore.
\end{example}

The main challenge is how to pack as many chores as possible within a bundle without violating the maximin share threshold.


We start by making a simple assumption on the size of the instance. For any instance, without loss of generality, we can always add dummy chores with value $0$ and assume that $m \geq 2 d +1$.\footnote{In Appendix~\ref{app:size} we show that this assumption is valid without adding dummy chores.}

Our first lemma will be used to bound the number of large chores in each bundle. It states that in an ordered chores instance, the most preferred $k+1$ chores from the set of the least preferred $kd+1$ chores are valued at least as much as $1$-out-of-$d$ MMS share.

\begin{lemma}\label{lem:counting}
Let $I = \ins{N, M, V}$ be an ordered chores instance, and $k$ and $d$ be non-negative integers such that $kd + 1 \leq m$. Then, for each agent $i \in N$, 
$$v_{i}(\{c_{kd - (k-1)}, c_{kd - (k-2)}, \ldots, c_{kd + 1}\}) \geq \text{MMS}_{i}^{d}(M).$$
\end{lemma}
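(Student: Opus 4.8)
The plan is to fix an arbitrary agent $i$ and sandwich the target bundle between $\text{MMS}_{i}^{d}(M)$ and the value of a single cleverly-chosen bundle from an \emph{optimal} $d$-partition for $i$. Write $B = \{c_{k(d-1)+1}, \ldots, c_{kd+1}\}$ for the bundle in the statement, and note it contains exactly $k+1$ chores; these are precisely the $k+1$ most preferred (smallest-magnitude) chores among the first $kd+1$ chores, which I collect into the set $S := \{c_1, \ldots, c_{kd+1}\}$. The set $S$ is well-defined because the hypothesis $kd+1 \leq m$ guarantees these chores exist. Then I would fix a $d$-partition $P^{*} = (P_1^{*}, \ldots, P_d^{*})$ of $M$ that witnesses the maximin share, so that $\min_{j} v_{i}(P_j^{*}) = \text{MMS}_{i}^{d}(M)$, and in particular $v_{i}(P_j^{*}) \geq \text{MMS}_{i}^{d}(M)$ for every $j$.

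The key combinatorial step is a pigeonhole count applied to $S$. The $kd+1$ chores of $S$ are distributed among the $d$ bundles of $P^{*}$; since $kd+1 > kd$, at least one bundle, say $P_j^{*}$, must contain at least $k+1$ chores of $S$. I fix any $k+1$ of those chores and call this subset $T$, so $T \subseteq P_j^{*} \cap S$ with $|T| = k+1$.

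Two monotonicity observations then finish the argument. First, because every chore has non-positive value, discarding the chores in $P_j^{*} \setminus T$ only weakly increases the value, giving $v_{i}(T) \geq v_{i}(P_j^{*}) \geq \text{MMS}_{i}^{d}(M)$. Second, because the instance is ordered and $B$ consists of the $k+1$ chores of smallest magnitude (most preferred) inside $S$, while $T$ is merely some $(k+1)$-subset of $S$, a term-by-term comparison shows $v_{i}(B) \geq v_{i}(T)$: replacing any chore of $T$ by the corresponding more-preferred chore of $B$ only raises the (non-positive) value. Chaining the two inequalities yields $v_{i}(B) \geq v_{i}(T) \geq \text{MMS}_{i}^{d}(M)$, which is exactly the claim, and since $i$ was arbitrary it holds for all agents.

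The substantive content lies entirely in the pigeonhole step, which is what forces $k+1$ of the large chores into a common MMS bundle and thereby connects the index arithmetic of $B$ to the partition size $d$. The part most prone to error is keeping the two monotonicity inequalities pointing the right way: both rely on chores being non-positive, but one uses "a subset of a bundle is weakly larger" and the other uses "the most-preferred $(k+1)$-subset of $S$ is weakly largest," so I would state each direction explicitly rather than lump them together. I should also sanity-check the degenerate case $k=0$, where $B = \{c_1\}$ and the claim reduces to the observation that the single largest chore lies in some MMS bundle and hence has value at least $\text{MMS}_{i}^{d}(M)$.
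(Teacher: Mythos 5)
Your proof is correct and takes essentially the same route as the paper's: a pigeonhole argument forcing $k+1$ chores of $S=\{c_{1},\ldots,c_{kd+1}\}$ into a single bundle of an optimal $d$-partition, followed by a term-by-term ordering comparison with the $k+1$ most preferred chores of $S$. The only immaterial difference is bookkeeping: you pigeonhole an MMS partition of the full set $M$ and discard the rest of the bundle using non-positivity, whereas the paper first restricts attention to $S$ and uses $\text{MMS}_{i}^{d}(S)\geq \text{MMS}_{i}^{d}(M)$, which rests on the same non-positivity observation.
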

\begin{proof}
Consider the subset of chores $S = \{c_{1}, c_{2}, \ldots, c_{kd+1}\}$. 
By definition, for every chore $c\in M$, $v_{i}(c) \leq 0$, thus we have $\text{MMS}_{i}^{d}(S) \geq \text{MMS}_{i}^{d}(M)$.
By the pigeonhole principle, since $|S| > kd$, any partition of $S$ into $d$ bundles $(A_{1}, \ldots, A_{d})$ must contain at least one bundle, say $A_{\ell}$, which contains at least $k+1$ chores. By definition, we have $v_{i}(A_\ell) \geq \text{MMS}_{i}^{d}(S)$.

Let the set $B\subset S$ contain the $k+1$ last (most preferred) chores of $S$. 
most preferred chores of $S$. 
Since chores are ordered from the least to the most preferred chores, this $B$ is weakly preferred to $A_\ell$. Thus, $v_{i}(A_{\ell}) \leq v_{i}(B)$ where $B = \{c_{kd - (k-1)}, c_{kd - (k-2)}, \ldots, c_{kd + 1}\}$. By transitivity, $v_{i}(B) \geq v_{i}(A_{\ell}) \geq \text{MMS}_{i}^{d}(S)$.
\end{proof}

\cref{lem:counting} links the number of chores to their values, and enables us to identify the number of large (least preferred) chores.

\begin{corollary} \label{cor:choreRed}
Given an ordered chores instance $I = \ins{N, M, V}$, and an integer $d\geq 1$, the following statements hold\footnote{If $d < 2d+1$, we may add $2d+1-m$ dummy chores with value $0$ to all agents.}:
\begin{enumerate}
    \item $v_i(\{c\}) \geq \MMS{d}{i}(M)$, for all $c\in M$;
    \item $v_i(\{c_d, c_{d+1}\}) \geq \MMS{d}{i}(M)$;
    \item $v_{i}(\{c_{2 d-1}, c_{2 d},~~ c_{2 d+1}\})\geq \MMS{d}{i}(M)$.
\end{enumerate}
\end{corollary}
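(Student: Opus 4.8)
The plan is to derive all three statements as direct instantiations of \cref{lem:counting}, taking $k=0$, $k=1$, and $k=2$ respectively, and then using the standing assumption $m \geq 2d+1$ (guaranteed, if necessary, by padding with dummy chores of value $0$ as in the footnote) to check the hypothesis $kd+1 \leq m$ in each case. The block of chores appearing in \cref{lem:counting} is $\{c_{kd-(k-1)}, \ldots, c_{kd+1}\}$, a run of $k+1$ consecutive chores whose top index is $kd+1$, so the whole task reduces to index bookkeeping plus one extra ordering argument for the first statement.

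First I would handle statements (2) and (3), which require no additional idea. Substituting $k=1$ collapses the block to $\{c_d, c_{d+1}\}$ (top index $d+1$), and the hypothesis $kd+1 \leq m$ becomes $d+1 \leq m$, which follows from $m \geq 2d+1$; this is exactly statement (2). Substituting $k=2$ gives the block $\{c_{2d-1}, c_{2d}, c_{2d+1}\}$ (top index $2d+1$), and here the hypothesis $kd+1 \leq m$ is precisely $2d+1 \leq m$, i.e.\ the assumption itself; this yields statement (3). In both cases \cref{lem:counting} directly delivers $v_i(\cdot) \geq \MMS{d}{i}(M)$.

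For statement (1) I would take $k=0$ in \cref{lem:counting}. The block degenerates to the single chore $\{c_1\}$ (since $kd-(k-1)=1=kd+1$), the hypothesis $kd+1=1\leq m$ is trivial, and the lemma gives $v_i(\{c_1\}) \geq \MMS{d}{i}(M)$. To upgrade this from $c_1$ to an arbitrary $c \in M$, I would invoke that $I$ is ordered: $c_1$ is the least preferred (largest-magnitude) chore, so $|v_i(c_1)| \geq |v_i(c)|$, and since all values are non-positive this flips to $v_i(c) \geq v_i(c_1)$. Chaining $v_i(\{c\}) \geq v_i(\{c_1\}) \geq \MMS{d}{i}(M)$ completes statement (1). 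I do not expect a genuine obstacle here; the only points needing care are the off-by-one computation of the endpoints of the block $\{c_{kd-(k-1)},\ldots,c_{kd+1}\}$ at the small values $k=0,1,2$, and the sign reversal when passing from absolute values to signed values of non-positive chores.
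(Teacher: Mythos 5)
Your proposal is correct and follows essentially the same route as the paper's own proof: instantiate \cref{lem:counting} at $k=0,1,2$, and for statement (1) use the ordering of the instance to pass from $c_1$ to an arbitrary chore $c$ via $v_i(c) \geq v_i(c_1)$. The only difference is that you verify the hypothesis $kd+1 \leq m$ explicitly in each case, which the paper leaves implicit (relying on its footnote about padding with dummy chores).
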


\begin{proof}
By setting $k = 0$ in \cref{lem:counting}, for each agent $v_{i}(\{c_1\}) \geq \MMS{n}{i}(M)$. Since $c_1$ is the worst chore in an ordered instance, for every other chore $c\in M$, $v_i(\{c\}) \geq \MMS{d}{i}(M)$.
Similarly, setting $k = 1$ and $k=2$ in \cref{lem:counting} yields claims (2) and (3).
\end{proof}

\section{$1$-out-of-$\lfloor\frac{2n}{3}\rfloor$ Maximin Share for Chores in Polynomial Time}
\label{sec:Chores_TwoThirds}

In this section, we present a polynomial-time algorithm for allocating chores that achieves $1$-out-of-$\lfloor\frac{2n}{3}\rfloor$ MMS. 
The algorithm takes a chores instance along with a set of thresholds for agents as an input and utilizes a greedy ``bag-filling'' procedure to assign bundles of chores to agents.
The high-level idea behind the algorithm is allocating the large (least desirable) chores first and packing as many chores as possible into a bundle up to the given threshold. 
The algorithmic idea is simple. The key in achieving $1$-out-of-$\lfloor\frac{2n}{3}\rfloor$ MMS approximation is selecting appropriate threshold values.



\paragraph{Algorithm description.}
The underlying structure of Algorithm~\ref{alg:chores_bagfill} is 
similar to the 
\textit{First-Fit-Decreasing} algorithm for bin-packing
\citep{johnson1973near}.%
\footnote{
The same algorithm is used by  \citet{huang2021algorithmic} for achieving multiplicative approximations of MMS.
They prove that, with appropriate thresholds, Algorithm \ref{alg:chores_bagfill} guarantees every agent at least $11/9$ of its MMS value. This does not directly imply any result for ordinal approximation as shown in Example \ref{exm:ordinal-vs-cardinal}.
}
It starts by selecting an empty bundle and adding a large (lowest value) chore  to the bag. While the value of the bag is above a threshold for at least one agent, add an additional chore---in order of the largest to smallest---to the bundle. 
If a chore cannot be added, the algorithm skips it and considers the next-smallest (more preferred) chores. Each agent has a different threshold, $\beta_{i}$, and assesses the bundle based on this threshold. 
When no more chores can be added, the bundle is allocated to an arbitrary agent who still finds it acceptable. The algorithm repeats with the remaining agents and chores.

\begin{algorithm}[t]
\small
\caption{\small{Algorithm for $1$-out-of-$d$ MMS approximation}}
\label{alg:chores_bagfill}
\KwIn{An ordered chores instance $I=\ins{N, M, V}$ and threshold values $(\beta_{i})_{i=1}^{n}$ with $\beta_i\leq 0$ for all $i\in N$.}
\KwOut{Allocation $A = (A_1,\ldots,A_n)$ satisfying $v_{i}(A_{i}) \geq \beta_{i}$ for all $i\in N$.}
\While(\tcp*[h]{there are remaining agents}){$|N| > 0$}{
\Comment{Adding as many chores as possible to a bundle}
Initialize $B$ as an empty bundle \;
\For{each remaining chore $c$ in descending order of absolute values (hardest to easiest chore)}{
\If{there exists agent $i$ s.t. $v_i(\{B \cup c\}) \geq \beta_i$}{
$B \leftarrow B \cup \{c\}$ \tcp{Adding $c$ to $B$} 
}
}
\Comment{Allocating the bundle to an agent.}
Select an agent $i$ such that $v_i(B) \geq \beta_i$ (arbitrary break ties)\;
$A_i \leftarrow B$ \;
$N \leftarrow N \setminus \{i\}$ \;
$M\leftarrow M \setminus B$ \;
}
\end{algorithm}

For any selection of non-positive thresholds $(\beta_{i})_{i=1}^{n}$, 
Algorithm~\ref{alg:chores_bagfill} guarantees that 1) every bundle is allocated to an agent who values it at least $\beta_{i}$, and 2) every agent receives a bundle (possibly an empty bundle).  However, if the thresholds are too optimistic (too close to zero), the algorithm may result in a partial allocation, i.e., some chores might remain unallocated. The main challenge is to carefully choose the threshold values such that the algorithm will provably terminate with a \textit{complete} allocation.%
\footnote{In contrast, when allocating goods, all goods are allocated, and the challenge is showing that all \emph{agents} receive a bundle of certain threshold.}

\begin{theorem}\label{thm:Chores_TwoThirds}
Given an additive chores instance, a $1$-out-of-$\floor{\frac{2n}{3}}$ MMS allocation exists and can be computed in polynomial time.
\end{theorem}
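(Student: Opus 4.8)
The plan is to apply Algorithm~\ref{alg:chores_bagfill} with $d = \floor{\frac{2n}{3}}$ and a carefully chosen threshold vector, then argue two things: that every agent who receives a bundle is satisfied (this is immediate from the algorithm's selection rule), and that the algorithm never gets stuck---i.e., it produces a \emph{complete} allocation in which all $n$ agents are served and no chore is left over. Since the per-bundle acceptance is built into the algorithm, the entire difficulty lies in proving completeness with the right thresholds. The natural choice is to set $\beta_i = \MMS{d}{i}(M)$ for every agent, so that acceptability coincides with the $1$-out-of-$d$ MMS guarantee we are trying to establish.

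First I would set up the invariant that will be maintained across iterations of the \textbf{while} loop: when $t$ agents remain and the chore set has been reduced to some $M'$, each remaining agent $i$ still satisfies $v_i(A_i)\ge \MMS{d}{i}(M)$ whenever it is served, and moreover the remaining chores can still be packed. The key quantitative tool is \cref{cor:choreRed}: claims (1)--(3) bound, respectively, the value of a single chore, of the pair $\{c_d,c_{d+1}\}$, and of the triple $\{c_{2d-1},c_{2d},c_{2d+1}\}$ against $\MMS{d}{i}(M)$. Because $d=\floor{2n/3}$, we have roughly $2d \approx \tfrac{4n}{3} > n$, so there is slack between the number of bundles $d$ used to define MMS and the number of agents $n$ who must be served. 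I would exploit this slack to show that each bundle the algorithm forms can absorb at least the ``first few'' large chores without dropping below the threshold, and that the number of bundles produced before all chores are exhausted is at most $n$.

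The crux of the completeness argument is a counting/potential argument bounding how many chores each allocated bundle consumes relative to how many large chores exist. Using \cref{lem:counting} with $k=1$ and $k=2$ (equivalently \cref{cor:choreRed}) one shows that a bundle built greedily from the largest remaining chores contains enough chores that, after serving all $n$ agents, no chore can remain. Concretely, I expect the argument to track the number of ``large'' chores (those that by themselves or in small groups are comparable to the MMS threshold): \cref{lem:counting} caps the number of chores that are so large they must sit alone or in pairs, and the gap between $d$ and $n$ guarantees the bag-filling procedure has room to pack the remaining smaller chores into the $n$ bundles. One must verify that whenever a bundle is closed, there is always at least one unserved agent who accepts it (so the selection step in the algorithm never fails), and that the chores removed each round keep the residual instance's MMS thresholds dominated by the original ones.

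The main obstacle, and where I would concentrate the real work, is the termination/completeness step: proving that with $\beta_i=\MMS{d}{i}(M)$ the greedy bag-filling never strands a chore nor an agent. Unlike the goods setting (where completeness of item-allocation is automatic and one only worries about agents), here chores \emph{must} all be placed, so I must rule out the failure mode in which the thresholds are met for the first several agents but the last bundle(s) become too heavy to be acceptable to anyone remaining. I anticipate this requires a delicate case analysis on the sizes of the largest chores---separating the regime where many chores exceed a constant fraction of the threshold (handled by the pigeonhole counting of \cref{lem:counting}) from the regime where chores are small (handled by a straightforward averaging argument showing each bundle stays within $\MMS{d}{i}$ of its target). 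Getting the constant $\tfrac{2}{3}$ to come out exactly will hinge on matching the $k=1,2$ bounds of \cref{cor:choreRed} against the inequality $2d+1\le m$ together with $d=\floor{2n/3}$, and I would expect the tightest part of the proof to be confirming that at most $n$ bundles are created.
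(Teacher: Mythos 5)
Your overall architecture---run Algorithm~\ref{alg:chores_bagfill} with $d=\floor{\frac{2n}{3}}$, observe that per-agent acceptability is built into the algorithm, and concentrate all effort on proving that every chore gets allocated---is exactly the paper's. But your choice of thresholds, $\beta_i = \MMS{d}{i}(M)$, creates a genuine gap: computing exact MMS values is NP-hard (as the paper notes, citing \citet{Bouveret2016} and \citet{Aziz_Rauchecker_Schryen_Walsh_2017}), so an algorithm that begins by computing $\MMS{d}{i}(M)$ for every agent cannot establish the polynomial-time half of the theorem, which is the central point of the result. The paper instead sets
$\beta_{i} = \min\bigl(v_{i}(c_{1}),\; v_{i}(\{c_{d}, c_{d+1}\}),\; v_{i}(\{c_{2d-1}, c_{2d}, c_{2d+1}\}),\; \frac{v_{i}(M)}{d}\bigr)$,
four quantities computable in strongly polynomial time. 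By \cref{cor:choreRed} and the inequality $\MMS{d}{i}(M)\le \frac{v_i(M)}{d}$, each term is at least $\MMS{d}{i}(M)$, so meeting the threshold still certifies the MMS guarantee; and, crucially, the inequalities $\beta_i\le$ (each term of the min) are precisely the handles the completeness argument needs: the $d$ large chores $c_1,\dots,c_d$ fit singly into the first $d$ bundles, the medium chores $c_{d+1},\dots,c_{2d}$ fit in pairs into the next $\ceil{\frac{d}{2}}$ bundles (with $d+\ceil{\frac{d}{2}}\le n$), and any leftover small chore $c$ satisfies $v_i(c)\ge \beta_i/3$, which combined with $\beta_i\le \frac{v_i(M)}{d}$ forces $v_i(A_j)<\frac{2\beta_i}{3}\le\frac{v_i(M)}{n}$ for every earlier bundle and hence, by averaging, lets the last agent absorb everything that remains. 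This distinction between cheap proxy thresholds and exact MMS thresholds is exactly why the paper's stronger $\floor{\frac{3n}{4}}$ result (\cref{lem:Chores_ThreeFourths}) is only an existence statement, approximated in polynomial time only up to a logarithmic loss in \cref{sec:polyApprox}.

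Two further issues with the sketch itself. First, your proposed invariant that the chores removed each round keep the residual instance's MMS thresholds dominated by the original ones is the valid-reduction property that \cref{sec:reductions} shows \emph{fails} for chores; the paper's proof never re-evaluates MMS on residual instances---all thresholds are fixed once, relative to the original $M$, and completeness is argued about the final allocation directly. Second, even granting exact MMS thresholds (which would salvage existence, since $\beta_i=\MMS{d}{i}(M)$ does satisfy $\beta_i\le$ all four proxy quantities), your completeness argument remains a plan rather than a proof: the decisive steps---the counting bound $d+\ceil{\frac{d}{2}}\le n$ placing all large and medium chores, and the last-agent averaging computation for small chores---are precisely what must be exhibited, and your sketch only anticipates ``a delicate case analysis'' at that point.
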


\begin{proof}
Let $I = \ins{N, M, V}$ be an ordered instance and $d = \floor{\frac{2n}{3}}$. 
Without loss of generality, we can assume that $m\geq 2d+1$ by adding dummy chores with value 0 for all agents.

For each agent, let the thresholds be selected as follows:
\begin{equation*}
   \beta_{i} = \min\left(v_{i}(c_{1}),~~ v_{i}(\{c_{d}, c_{d+1}\}),~~ v_{i}(\{c_{2 d-1}, c_{2 d},~~ c_{2 d+1}\}), \frac{v_{i}(M)}{d}\right). 
\end{equation*}
\cref{cor:choreRed} and the inequality $\frac{v_{i}(M)}{d}\geq \text{MMS}_{i}^{d}(M)$ imply that all agents receive their $1$-out-of-$d$ MMS, that is, $\beta_{i} \geq \text{MMS}_{i}^{d}(M)$.

In order to show that all chores are allocated, we split the chores into three categories of large ($\{c_{1}, \ldots, c_{d}\}$), medium $(\{c_{d+1}, \ldots, c_{2d}\}$), and small $(\{ c_{2d+1}, \ldots, c_m\})$ chores.

Since for all $i\in N$, $v_{i}(c_{1}) \geq \beta_{i}$, every single chore can be added to an empty bag. 
Consider the first $d$ bundles. Since these bundles contain at least one chore each, and $d\leq n$, the $d$ large chores are allocated within the first $d$ iterations.

Similarly, since $v_{i}(c_{d},c_{d+1}) \geq \beta_{i}$, the medium chores may be bundled in pairs from largest to smallest and form the next bundles.
This implies that, within the first $d + \ceil{\frac{d}{2}}$ allocated bundles, all large and medium chores are allocated. Importantly, 
$$d + \ceil{\frac{d}{2}} = \floor{\frac{2n}{3}} + \ceil{\frac{\floor{\frac{2n}{3}}}{2}} \leq \floor{\frac{2n}{3}} + \ceil{\frac{n}{3}} = n.$$

Thus, we conclude that all large and medium chores are allocated upon the termination of the algorithm.

The last step is to prove that all small chores are allocated too. These chores are added to bundles whenever there is additional gap between $v_{i}(A_{i})$ and $\beta_{i}$. 
Consider the last agent, $i$, who receives a bundle before Algorithm~\ref{alg:chores_bagfill} terminates.
If no small chores remain before agent $i$ receives a bundle, then we are done.

Suppose that there is some remaining small chore $c$ before agent $i$ receives a bundle. For each other bundle $A_{j}$ already allocated, necessarily $v_{i}(A_{j} \cup \{c\}) < \beta_{i}$,  because otherwise agent $i$ would have accepted $A_{j} \cup \{c\}$ and chore $c$ would have been added to $A_{j}$. Now, since $v_{i}(\{c_{2d-1}, c_{2d}, c_{2d+1}\}) \geq \beta_{i}$ and the instance is ordered, we have that $v_{i}(c) \geq v_{i}(c_{2d+1}) \geq \frac{\beta_{i}}{3}$. In turn, this implies that $v_{i}(A_{j}) < \beta_{i} - v_{i}(c) = \frac{2\beta_{i}}{3}$ for each $j \neq i$. 

By the way we selected the thresholds, we have that $\beta_{i} \leq \frac{v_{i}(M)}{d}$. 
We use this fact to upper bound the amount of value in each previously allocated bundle: 
$$v_{i}(A_{j}) < \frac{2\beta_{i}}{3} ,$$

which implies that
$$v_{i}(A_{j}) < \frac{2v_{i}(M)}{3d}.$$

By replacing the value of $d$, we have
$$v_{i}(A_{j}) < \frac{2v_{i}(M)}{3\floor{\frac{2n}{3}}}.$$

Therefore,

$$v_{i}(A_{j}) \leq \frac{2}{3} \cdot \frac{3}{2} \cdot \frac{v_{i}(M)}{n} = \frac{v_{i}(M)}{n}.$$

This inequality implies that before the last bundle is initialized, agent $i$ values the remaining items at least $v_{i}(M) - \sum_{j \neq i} v_{i}(A_{j}) > v_{i}(M) - (n-1) \frac{v_{i}(M)}{n} = \frac{v_{i}(M)}{n} \geq \beta_{i}$. Thus, agent $i$ can take all the remaining chores.
\end{proof}

\begin{remark}
Interestingly, for goods, $1$-out-of-$\floor{\frac{3n}{2}}$ MMS approximations exist \cite{hosseini2021mms} and can be computed in polynomial time \cite{hosseini2021ordinal}. However, the techniques used for proving the existence results as well as developing a tractable algorithm are substantially different due to reductions available for goods (as discussed in \cref{sec:reductions}) as well as challenges posed by packing bundles as much as possible to ensure complete allocations of chores.
On the other hand, in the case of goods even a slight error in computing MMS values may result in wasting values and not having sufficient goods to satisfy some agents (see \citep{hosseini2021mms} for an example) whereas for chores we can tolerate an estimate of MMS values as long as all chores are  allocated.
\end{remark}


\section{$1$-out-of-$\lfloor\frac{3n}{4}\rfloor$ MMS Allocations Exist for Chores}
\label{sec:Chores_ThreeFourths}

In this section, we show that a careful selection of threshold values in Algorithm~\ref{alg:chores_bagfill}, in fact, guarantees $1$-out-of-$\floor{\frac{3n}{4}}$ MMS approximation. To achieve this result we require a precise computation of MMS values for each agent, which in turn is intractable \citep{Bouveret2016}. Nonetheless, we prove the existence of $1$-out-of-$\lfloor\frac{3n}{4}\rfloor$ MMS, and later in \cref{sec:polyApprox} provide a polynomial-time algorithm that achieves an approximation of this bound.

\begin{theorem}
\label{thm:Chores_ThreeFourths}
Given an additive chores instance, a $1$-out-of-$\floor{\frac{3n}{4}}$ MMS allocation is guaranteed to exist.
\end{theorem}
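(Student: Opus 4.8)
The plan is to run \cref{alg:chores_bagfill} exactly as in the proof of \cref{thm:Chores_TwoThirds}, but with the sharper thresholds $\beta_i = \text{MMS}_i^d(M)$ for $d = \floor{\frac{3n}{4}}$, rather than the proportionality-based thresholds used there. With this choice the fairness guarantee is immediate: by construction every agent that receives a bundle values it at least $\beta_i = \text{MMS}_i^d(M)$, so the only thing to establish is that the algorithm terminates with a \emph{complete} allocation and never exhausts the agents before all chores are placed. As in \cref{thm:Chores_TwoThirds}, I would first add dummy chores of value $0$ so that $m \ge 3d+1$; this leaves every $\text{MMS}_i^d$ unchanged and such chores are always absorbed, and it lets me invoke \cref{lem:counting} up to $k=3$.

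The backbone of the completeness proof is a last-agent / value-counting argument. Let $i$ be the last agent to be served and $R$ the chores remaining when it is her turn; I want $v_i(R) \ge \beta_i$. The quantitative gain over \cref{thm:Chores_TwoThirds} comes from \cref{lem:counting} with $k=3$: a chore in position $\ge 3d+1$ (a ``small'' chore) satisfies $v_i(c) \ge \frac14\beta_i$. If such a small chore $c$ is still available, then for every previously allocated bundle $A_j$ we must have $v_i(A_j \cup \{c\}) < \beta_i$, since otherwise $c$ would have been added to $A_j$ while it was being built; hence $v_i(A_j) < \beta_i - \frac14\beta_i = \frac34\beta_i$. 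Substituting $v_i(A_j) < \frac34\beta_i \le \frac34\cdot\frac{v_i(M)}{d}$ together with $d=\floor{\frac{3n}{4}}$ into $v_i(R) = v_i(M) - \sum_{j\ne i} v_i(A_j)$ yields $v_i(R) > \frac{v_i(M)}{n} \ge \beta_i$, in exact analogy with how the factor $\frac23$ produced the bound $\frac{2n}{3}$ before. This arithmetic is routine; the real content is the cutoff $3d+1$ and the fraction $\frac14$.

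The step I expect to be the main obstacle is the \emph{large} chores, and this is precisely where the exact MMS value (rather than the proportionality surrogate that sufficed for \cref{thm:Chores_TwoThirds}) is indispensable. The clean argument above collapses in the regime where, by the time the last agent is reached, no small chore remains and $R$ consists only of ``large-ish'' chores (each of burden exceeding $\frac14|\beta_i|$): there is no rejected small chore to charge against, so $v_i(R)$ must be bounded directly. A naive categorization is hopeless here, since treating positions $1,\dots,d$ as forced singletons already consumes $d=\floor{\frac{3n}{4}}$ bundles, and adding even pairs for the medium chores overshoots $n$. The resolution must exploit the adaptivity of $\text{MMS}_i^d$: if $\beta_i$ is close to $v_i(c_1)$ then the instance cannot contain many other heavy chores (otherwise the optimal $d$-partition would be forced to a strictly worse bundle, lowering $\beta_i$), so few bundles are needed; whereas if $\beta_i$ lies far below $v_i(c_1)$, each bundle has enough slack to swallow several chores.

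I would formalize this via a pigeonhole count on agent $i$'s optimal $d$-partition: a chore of burden more than $\frac14|\beta_i|$ can occur at most three times in any MMS bundle, so there are at most $3d$ such large-ish chores, and their undersized bundles can be matched against the surplus of $n-d = \ceil{n/4}$ agents. Combining this count with the identity $v_i(M) \ge d\beta_i$ (the optimal $d$-partition has $d$ bundles each worth at least $\beta_i$), the hypothesis $v_i(R) < \beta_i$ forces $\sum_{j\ne i} v_i(A_j) > (d-1)\beta_i$, which the per-bundle bound $v_i(A_j) < \frac34\beta_i$ should contradict once the heavy-chore bookkeeping is in place. I expect this matching of heavy-chore bundles against the $n-d$ extra agents, while preserving the value bound on the remaining bundles, to be the technically delicate heart of the proof, and the reason the result is stated only existentially: the argument leans on having the true $\text{MMS}_i^d$ in hand rather than an efficiently computable estimate.
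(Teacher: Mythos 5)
Your high-level plan coincides with the paper's: run Algorithm~\ref{alg:chores_bagfill} with the exact thresholds $\beta_i=\MMS{d}{i}(M)$ for $d=\floor{\frac{3n}{4}}$, and reduce the theorem to showing that the algorithm allocates \emph{every} chore (this is the paper's \cref{lem:Chores_ThreeFourths}). Your small-chore case is correct and is essentially the paper's Case~1: a chore of value at least $\beta_i/4$ that survives to the end forces $v_i(A_j)<\frac{3}{4}\beta_i$ for every allocated bundle, and summing over all $n$ bundles contradicts $v_i(M)\geq d\,\beta_i\geq \frac{3n}{4}\beta_i$.

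The large-chore case, however, contains a genuine gap, and it is exactly the part that constitutes the substance of the paper's proof. Your proposed contradiction is circular: if the stranded chore $c$ is large, charging it against a bundle $A_j$ yields only $v_i(A_j)<\beta_i-v_i(c)$, and since every single chore satisfies $v_i(c)\geq\MMS{d}{i}(M)=\beta_i$ (\cref{cor:choreRed}), this bound can be as weak as $v_i(A_j)<0$; the per-bundle bound $v_i(A_j)<\frac{3}{4}\beta_i$ that your final step invokes is precisely the one you conceded is unavailable once no small chore remains. Nor can a static pigeonhole count (at most $3$ large chores per MMS bundle, hence at most $3d$ in total, matched against the $n-d$ spare agents) close this case: the failure mode is not the \emph{number} of large chores but the greedy rule packing them differently from the MMS partition, as \cref{prop:upper-bound-7} illustrates --- greedy forms bundles $\{-201,-102\}$ and wastes capacity that the MMS bundles $\{-201,-101,-98\}$ exploit, leaving a chore stranded even though all counting constraints are met. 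What the paper does instead, and what is missing from your proposal, is a \emph{dynamic} induction over the algorithm's iterations: it maintains ``shadow bundles,'' initialized to the sets of large chores of agent $i$'s MMS bundles plus $n-\floor{\frac{3n}{4}}$ empty spares, and after each iteration it repairs them by swapping chores so that two invariants persist --- each already-processed shadow bundle is contained in the corresponding algorithm bundle, and every remaining shadow bundle has at most three chores and value at least $\beta_i$. The spare bundles serve one specific purpose: absorbing the third chore (which has value at least $\beta_i/3$) of any shadow bundle containing three large chores, feasible because $\ceil{\frac{1}{3}\floor{\frac{3n}{4}}}\leq n-\floor{\frac{3n}{4}}$. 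This step-by-step coupling of the greedy execution to the MMS partition is the delicate heart of the proof that your phrase ``matching against the surplus agents'' gestures at but does not supply; without it, the theorem is not proved.
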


\Cref{thm:Chores_ThreeFourths} is an immediate corollary of \cref{lem:Chores_ThreeFourths} below. For the ease of exposition, we first provide the proof of the theorem.

\begin{proof}
By construction, Algorithm~\ref{alg:chores_bagfill} terminates and every agent $i\in N$ receives a bundle (possibly empty) with the value of at least $\beta_i$. 
By Lemma \ref{lem:Chores_ThreeFourths}, we can pick for each agent $i$ the threshold $\beta_i = \MMS{d}{i}(M)$ where $d = \floor{\frac{3n}{4}}$, and all chores will be allocated. Thus, we have a complete allocation in which each agent's value is at least  $1$-out-of-$\floor{\frac{3n}{4}}$ MMS, which proves \Cref{thm:Chores_ThreeFourths}.
\end{proof}

\begin{lemma}
\label{lem:Chores_ThreeFourths}
Suppose Algorithm~\ref{alg:chores_bagfill} is executed with threshold values $\beta_i \leq \MMS{\floor{\frac{3n}{4}}}{i}(M)$ for all $i\in N$. Then all chores are allocated upon termination of the algorithm.
\end{lemma}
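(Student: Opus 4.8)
The plan is to argue by contradiction, mirroring the final ``last-agent'' step in the proof of \cref{thm:Chores_TwoThirds}, but sharpening the constant from $3$ to $4$ by invoking \cref{lem:counting} with $k=3$. First I would reduce to an ordered instance and, as permitted, pad with value-$0$ dummy chores so that $m \ge 3d+1$ where $d=\floor{\frac{3n}{4}}$; this leaves every $\MMS{d}{i}(M)$ unchanged and lets me apply \cref{lem:counting} for $k\in\{0,1,2,3\}$. Since $v_i(\{c\}) \ge \MMS{d}{i}(M) \ge \beta_i$ for every single chore (\cref{cor:choreRed}), each bag is nonempty and acceptable to someone whenever chores remain, so the while-loop performs exactly $n$ iterations and serves all $n$ agents; the only thing left to establish is that nothing is left over. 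Note also that a $0$-value dummy is always absorbable (adding it never decreases a bundle's value), so any leftover chore must be a genuine chore.

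Suppose for contradiction that some chore remains. Let $i^*$ be the agent served in the last iteration and let $c$ be the leftover chore of largest value (least negative) for $i^*$. The key observation is that $i^*$ was present while \emph{every} bundle $A_j$ was filled and $c$ was available throughout; hence for each of the $n$ bundles $v_{i^*}(A_j \cup \{c\}) < \beta_{i^*}$ (otherwise $c$ would have been added), i.e.\ $v_{i^*}(A_j) < \beta_{i^*} - v_{i^*}(c)$. Summing over all $n$ bundles and using $\sum_j v_{i^*}(A_j) = v_{i^*}(M) - v_{i^*}(R) \ge v_{i^*}(M) - v_{i^*}(c)$ (where $R$ is the leftover set, and $v_{i^*}(R)\le v_{i^*}(c)$ since $c\in R$ and the other values are non-positive), I obtain
\begin{equation*}
v_{i^*}(M) < n\beta_{i^*} - (n-1)\,v_{i^*}(c).
\end{equation*}
If $c$ is ``small'' in the sense that $v_{i^*}(c) \ge \tfrac14\beta_{i^*}$, then substituting this together with $\beta_{i^*}\le \MMS{d}{i^*}(M)\le \frac{v_{i^*}(M)}{d}$ gives $v_{i^*}(M) < \frac{3n+1}{4}\cdot\frac{v_{i^*}(M)}{d}$; dividing by $v_{i^*}(M)<0$ forces $d > \frac{3n+1}{4}$, contradicting $d=\floor{\frac{3n}{4}}\le\frac{3n}{4}$.

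It remains to secure the bound $v_{i^*}(c)\ge\tfrac14\beta_{i^*}$, and this is the crux. By \cref{lem:counting} with $k=3$ we have $v_{i^*}(c_{3d+1}) \ge \tfrac14\MMS{d}{i^*}(M) \ge \tfrac14\beta_{i^*}$, and the same holds for every chore of index $>3d$; so the estimate is immediate \emph{provided} the leftover chore $c$ has index greater than $3d$, equivalently provided no chore of value below $\tfrac14\beta_{i^*}$ is ever stranded. The main obstacle is exactly to rule out a leftover ``large'' chore (value $<\tfrac14\beta_{i^*}$): there the value inequality above is too weak, so I would argue structurally. Using the $d$-partition witnessing $\MMS{d}{i^*}(M)$, each of its $d$ parts contains at most three chores of value below $\tfrac14\beta_{i^*}$ (four would sum below $\beta_{i^*}$), so there are at most $3d$ such large chores, forming a prefix $c_1,\dots,c_{L'}$ with $L'\le 3d$ of the ordered instance. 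I would then show the greedy, processing hardest-first, never strands one: when an agent tolerates several large chores in one bag the algorithm packs them together (consuming few bundles), and when agents are too picky to do so their thresholds force the number of large chores to shrink correspondingly. Making this self-balancing precise against the MMS $d$-partition—and reconciling it with the fact that a bundle handed to some $g\ne i^*$ may legitimately absorb chores that are large for $i^*$ yet acceptable to $g$—is, I expect, the hardest part of the argument.
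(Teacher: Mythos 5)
Your reduction to an ordered, padded instance, your ``last-agent'' summation argument for small leftover chores, and your counting bound (each part of $i^*$'s MMS partition holds at most three chores of value below $\tfrac14\beta_{i^*}$, hence at most $3d$ large chores in total) are all correct, and they coincide with the first half of the paper's proof. But the part you explicitly defer---showing that the greedy bag-filling never strands a \emph{large} chore---is not a technical loose end; it is the entire content of the lemma, and it is exactly where the constant $3/4$ is earned (the easy summation argument alone only supports the $\floor{2n/3}$ bound of \cref{thm:Chores_TwoThirds}). Your sketch (``when an agent tolerates several large chores the algorithm packs them together; when agents are too picky the number of large chores shrinks'') does not survive contact with the difficulty you yourself identify: a bundle handed to an agent $g\neq i^*$ can absorb chores in a pattern bearing no relation to $i^*$'s MMS partition, so a per-bundle comparison against that partition cannot be done statically. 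This is a genuine gap.

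What the paper does to close it is an inductive \emph{shadow-bundle} argument, maintained in $i^*$'s valuation only. Scale so that $\MMS{\floor{\frac{3n}{4}}}{i^*}(M)=-1$. Initialize $M'_j$ as the set of large chores in the $j$-th MMS bundle of $i^*$ (so $|M'_j|\le 3$ and $v_{i^*}(M'_j)\ge -1$), plus $n-\floor{\frac{3n}{4}}$ empty shadow bundles, and prove by induction on the iteration $t$ that (1) $M'_j\subseteq B_j$ for all $j<t$, and (2) every remaining shadow bundle has at most $3$ chores and value at least $-1$. Preserving these invariants requires three moves you would have to invent: if the algorithm's bundle $B_t$ takes as its second chore something other than $M'_t[2]$, that chore was processed earlier, hence is weakly larger, so it can be \emph{swapped} with $M'_t[2]$ inside the shadow bundle it came from without decreasing that bundle's value; the third chore $M'_t[3]$, which by pigeonhole has value at least $-\tfrac13$, is \emph{relocated} to an initially empty shadow bundle---and the count $\ceil{\tfrac13\floor{\frac{3n}{4}}}\le n-\floor{\frac{3n}{4}}$ of spare bundles needed is precisely where the ratio $3/4$ is spent; finally the chores of $B_t\setminus M'_t$ are \emph{folded into} $M'_t$ so that the head of the next shadow bundle is the largest remaining chore. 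Note that the invariants never require $v_{i^*}(B_t)\ge\beta_{i^*}$ for bundles given to other agents, which is how the paper resolves the obstruction you flagged; at termination, invariant (2) leaves all remaining large chores inside $M'_n$ with value at least $-1\ge\beta_{i^*}$, so the last agent absorbs them all.
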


\begin{proof}
Let $I = \ins{N, M, V}$ be an ordered chores instance. 
For simplicity, we start by scaling the valuations such that for each agent $i\in N$, $\MMS{\floor{\frac{3n}{4}}}{i}(M) = -1$.\footnote{This scaling step is only used to simplify the proof. An identical result can be achieved without scaling the valuations by setting all thresholds to $\beta_{i} = \MMS{d}{i}(M)$ where $d = \floor{\frac{3n}{4}}$ and updating the rest of the values in the proof accordingly.} 
This implies that 
\begin{align}
\label{eq:3n/4}
v_{i}(M) \geq -\floor{\frac{3n}{4}}\geq -\frac{3n}{4}
\end{align}
and $\beta_i\leq -1$ for each agent $i \in N$. 


Let agent $i$ be the last agent who received a bundle (in the $n$-th iteration). The proof proceeds by considering two types of remaining chores according to their value: 1) small chores $c\in M$ with value $v_i(c) \geq -\frac{1}{4}$, and 2) large chores $c\in M$ with value $v_i(c) < -\frac{1}{4}$.

\textbf{Case 1: small chores.} 
Suppose for contradiction that there is some chore $c\in M$ such that $v_{i}(c) \geq -\frac{1}{4}$ that remains unallocated at the end of the algorithm. By assumption, agent $i$ could not add $c$ to any allocated bundle, including $i$'s own bundle. Since $i$ is the last agent, we infer that for each agent $j\in N$ with bundle $A_j$, $v_{i}(A_{j} \cup \{c\}) < -1$.
By additivity, because $v_{i}(c) \geq -\frac{1}{4}$, we can write $v_{i}(A_{j}) < -\frac{3}{4}$ for all $j\in N$.
Summing over all assigned bundles gives $v_i(M) < -\frac{3n}{4}$, which contradicts \eqref{eq:3n/4}. Therefore, no such small chore remains at the end of the algorithm.

\textbf{Case 2: large chores.}
Suppose that there is some chore $c\in M$ such that $v_{i}(c) < -\frac{1}{4}$ that remains unallocated at the end of the algorithm. We define the following sets of bundles.
\begin{itemize}
\item 
 $M_{1}, \ldots, M_{\floor{\frac{3n}{4}}}$ are \emph{MMS bundles} --- bundles that comprise a $\MMS{\floor{\frac{3n}{4}}}{i}(M)$ partition of agent $i$.
\item $B_1,\ldots,B_n$ are \emph{algorithm bundles} --- bundles allocated by Algorithm \ref{alg:chores_bagfill}. $B_t$ denotes the bundle allocated at iteration $t$.
\end{itemize}
For each MMS bundle $M_j$, let $M_j[s]$ denote the $s$-th largest chore (least valued) of $M_{j}$.
Whenever $|M_{j}| < s$, we define $v_{i}(M_{j}[s]) = 0$.
Without loss of generality, we assume that the MMS bundles are sorted such that $|v_{i}(M_{1}[1])| \geq |v_{i}(M_{2}[1])| \geq \ldots \geq |v_{i}(M_{\floor{\frac{3n}{4}}}[1])|$. 
Since valuations are scaled so that  $\MMS{\floor{\frac{3n}{4}}}{i}(M) = -1$,
there are at most $3$ large chores (with value less than $\frac{1}{4}$) in each MMS bundle.

For the sake of the proof, we maintain a vector of \emph{shadow-bundles} $M_{1}',
 M_{2}', \ldots, M_{n}'$, which is initialized as follows:
\begin{itemize}
\item For each $j\in\{1,\ldots,\floor{\frac{3n}{4}}\}$, $M_{j}' \coloneq $ the set of large chores (with value less than $-\frac{1}{4}$ to $i$) in $M_{j}$.
\item For each $j\in\{\floor{\frac{3n}{4}},\ldots,n\}$,
$M_{j}' \coloneq \emptyset$.
\end{itemize}
At each iteration $t$ of the algorithm, we edit the vector of shadow-bundles by moving some chores between bundles. We do so such that, at the start of iteration $t$, the following invariants hold:
\begin{enumerate}
\item $M'_j\subseteq B_j$ for all $j<t$. That is, each chore in the  shadow-bundles $M_{1}', \ldots, M_{t-1}'$ is allocated.
\item $|M_{j}'| \leq 3$ and $v_{i}(M_{j}') \geq -1$ for $j \geq t$.
That is, each remaining shadow-bundle $M_{t}', \ldots, M_{n}'$ has value at least $-1$.
\end{enumerate}
Both invariants hold before the first iteration ($t=1$):
invariant (1) holds vacuously, and invariant (2) holds since each bundle $M_{j}'$ is contained in one of $i$'s MMS bundles.

Suppose the invariants hold before iteration $t\geq 1$. We show how to edit the shadow-bundles such that the invariants still hold before iteration $t+1$.

We reorder the shadow-bundles  $M_{t}', \ldots, M_{n}'$ so that $M_{t}'[1]$ is the largest remaining chore. Hence, in iteration $t$, Algorithm~\ref{alg:chores_bagfill} selects this chore first to add to the bag.
That is, $B_t[1] = M_{t}'[1]$.
We split to cases based on the size of $|M_{t}'|$, which must be in $\{1,2,3\}$ 
by invariant (2).

If $|M_{t}'| = 1$,
then both invariants hold at $t+1$, since $M_{t}' \subseteq B_{t}$, and the shadow-bundles do not change.

If $|M_{t}'| = 2$,
then we have to handle $M_{t}'[2]$.
By invariant (2) we have 
$M_{t}'[1]+M_{t}'[2]\geq -1$. This means that $M_{t}'[2]$ can potentially be inserted as the second chore in $B_t$. If indeed $B_t[2]=M_{t}'[2]$, then we are done --- both invariants hold at $t+1$, since $M_{t}' \subseteq B_{t}$, and the shadow-bundles do not change.
If $B_t[2]\neq M_{t}'[2]$, 
this means that 
Algorithm~\ref{alg:chores_bagfill} processed chore $B_t[2]$ before chore $M_{t}'[2]$. 
Since the algorithm processes jobs by ascending order of values (descending order of absolute values), this implies that 
$v_i(B_t[2])\leq v_i(M_{t}'[2])$.
Now, we find the chore $B_t[2]$ in some shadow-bundle $M'_j$ for some $j>t$, and swap it with $M_{t}'[2]$. We claim that both invariants still hold:
\begin{enumerate}
\item $M'_t\subseteq B_t$, since after the swap $B_t[1]=M'_t[1]$ and $B_t[2]=M'_t[2]$, and $|M'_t|=2$.
\item The remaining shadow bundles remained as before, except for the shadow-bundle $M'_j$, in which a single chore was swapped. But, because $v_i(B_t[2])\leq v_i(M_{t}'[2])$, the value of $M'_j$ weakly increases, so it is still at least $-1$.
\end{enumerate}

Finally, suppose $|M_{t}'| = 3$.
We handle $M_t'[2]$ as in the previous case, so that now
$M'_t[1]=B_t[1]$
and
$M'_t[2]=B_t[2]$.
It remains to handle $M_{t}'[3]$.
Because $M_{t}'[3]$ is the smallest chore in $M_{t}'$, and
$v_i(M_t')\geq -1$,
by the pigeonhole principle we must have $v_{i}(M_{t}'[3]) \geq -\frac{1}{3}$. We move chore $M_{t}'[3]$ to a bundle $M_{j}'$ which was initially empty and which contains fewer than $3$ chores (all of which were moved to the bundle this way and thus have value at least $-\frac{1}{3}$). Such a bundle can always be found because at most one chore is moved this way in each iteration, and there are at most $\floor{\frac{3n}{4}}$ bundles $M_{j}'$ which were initially non-empty. Thus an upper bound on the number of bundles filled this way is: $\ceil{\frac{1}{3} \cdot \floor{\frac{3n}{4}}} \leq \frac{n}{4} \leq n - \floor{\frac{3n}{4}}$. Since each chore moved this way has value at least $-\frac{1}{3}$, we preserve invariant (2)
$|M_j'|\leq 3$ and $v_{i}(M_{j}') \geq -1$.
After the move, $M'_t$ contains only two chores, both of which are in $B_t$, so invariant (1) holds too.

We note that if the first chore $B_{t}[1]$ is selected from one of these growing bundles, then because this chore has value at least $-\frac{1}{3}$ and because chores are only moved if $v_{i}(M_{t}'[1]) < -\frac{1}{3}$, no more chores will be moved in later iterations.


The final step in proving the lemma is to move all chores from $B_{t} \setminus M_{t}'$ to $M_{t}'$. This step is necessary in order to guarantee that the largest remaining chore in later steps is not from $B_{t} \setminus M_{t}'$ (and thus $M_{t+1}'[1] \notin B_{t} \setminus M_{t}'$).\footnote{For example, consider $M_{1}' = \{c_{1}, c_{2}\}$ and $M_{2}' = \{c_{3}, c_{4}, c_{5}\}$. It is possible that $B_{1} = \{c_{1}, c_{2}, c_{3}\}$, which means that $B_{2}[1] = c_{4}$ but $M_{2}'[1] = c_{3}$.} We may do this because it preserves $M_{t}' \subseteq B_{t}$. Notice that $v_{j}(B_{t}) \geq -1$ for the agent $j \in N$ who received bundle $B_{t}$; however, we do not require that $v_{i}(B_{t}) \geq -1$, as agent $i$ is not be allocated the bundle $B_{t}$. 
Observe that the chores $B_{t} \setminus M_{t}'$ correspond to additional large chores which could be added to the bundle $M_{t}'$, and thus, in moving these chores, the value of bundles $M_{j}'$ for $j > t$ can only weakly increase and will remain at least $-1$.

Lastly, invariant (2) implies that after iteration $(n-1)$, $M_{n}'$ has value at least $-1$ for agent $i$. All remaining large chores lie in this bundle. Thus agent $i$ may take all such large chores. This implies that $M_{n}' \subseteq B_{n}$ and that no large chores remain when the algorithm terminates.
\end{proof}

We do not know whether the $\floor{3n/4}$ factor is tight in general.
The following proposition shows a non-tight upper bound on the performance of Algorithm \ref{alg:chores_bagfill} for large values of $n$.

\begin{proposition}[Upper bound for Algorithm \ref{alg:chores_bagfill}]
\label{prop:upper-bound-n}
For any integer $k\geq 0$,  there is an instance with $n=11k+7$ agents in which Algorithm \ref{alg:chores_bagfill} cannot guarantee to each agent its  $1$-out-of-$(9k+6)$ MMS.
\end{proposition}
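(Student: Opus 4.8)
The plan is to exhibit a concrete instance on which \cref{alg:chores_bagfill}, no matter how its thresholds are chosen, fails to return a complete allocation giving every agent at least its $1$-out-of-$(9k+6)$ MMS. The engine is the observation that, on an instance with \emph{identical} additive valuations and common threshold $\beta_i=-1$, \cref{alg:chores_bagfill} is \emph{exactly} First-Fit-Decreasing (FFD) bin packing with unit-capacity bins: sorting chores by decreasing $|v_i(\cdot)|$ is sorting items by decreasing size; ``add $c$ to the current bundle $B$ whenever $v_i(B\cup\{c\})\ge -1$'' is ``add $c$ to the current bin whenever it still fits''; and because the inner loop rescans \emph{all} remaining chores for each bundle, a bundle collects precisely the items FFD would place in the corresponding bin (an easy induction on the bundle index, since both processes grow a bin by the same largest-first fit rule). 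First I would make this correspondence precise, so that the algorithm's behaviour can be read off a bin-packing instance.

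Next I would take the chores to be a scaled copy of the tight worst-case instance for the classical bound $\mathrm{FFD}\le \tfrac{11}{9}\,\mathrm{OPT}+\tfrac{6}{9}$, specialised to $\mathrm{OPT}=9k+6$ (a constant boundary gadget plus $k$ repetitions of the base block, so that the counts come out to $n=11k+7$ and $d=9k+6$). All agents share this valuation, scaled so the optimal bin packing uses $9k+6$ bins each filled \emph{exactly} to capacity $1$; then $v_i(M)=-(9k+6)$ and there is a $(9k+6)$-partition into bundles of value $-1$, whence $\MMS{9k+6}{i}(M)=-1$ for every $i$ (the lower bound is the explicit partition, the upper bound is $v_i(M)/(9k+6)=-1$). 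For this instance $\mathrm{FFD}=\tfrac{11}{9}(9k+6)+\tfrac{6}{9}=11k+8=n+1$. Note that a $1$-out-of-$(9k+6)$ MMS allocation nonetheless \emph{exists} (the optimal packing plus $2k+1$ empty bundles), so the proposition is genuinely about the algorithm, not about existence as in \cref{thm:Chores_ThreeFourths}.

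With the instance in hand, the canonical case is immediate: running \cref{alg:chores_bagfill} with $\beta_i=\MMS{9k+6}{i}(M)=-1$ reproduces FFD, which opens $11k+8>n$ bins, so after the $n$ iterations of the while-loop some chore is still unallocated. The substantive step is to rule out every other threshold vector. Since all agents have MMS value $-1$, any bundle handed out with $v_i$-value $<-1$ violates its recipient's MMS, so it suffices to show that no thresholds let the algorithm pack all chores into $n$ bundles of value $\ge-1$ (``load'' $\le 1$). I would argue bundle-by-bundle, inductively maintaining that every closed bundle equals the matching FFD bin: when the current fill-cap $\max_j(-\beta_j)$ is $>1$, the FFD stopping condition (every remaining chore exceeds the leftover space $1-L$ of that FFD bin) forces the greedy fill either to add a chore and overshoot to load $>1$ (an MMS violation) or to add nothing and coincide with the FFD bin; when the cap is $\le 1$, the bundle packs weakly less than the FFD bin, which can only increase the number of bundles required. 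Hence the only way to keep all loads $\le 1$ is to reproduce the FFD packing, needing $n+1$ bundles, which cannot be completed within $n$.

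I expect the main obstacle to be precisely this ``for all thresholds'' argument: making rigorous the monotonicity claim that under-filling a bundle never lets the sequential greedy process finish in fewer bundles than FFD, so that FFD is genuinely the best \cref{alg:chores_bagfill} can do subject to the MMS constraint, while over-filling is the only alternative. A secondary, more mechanical difficulty is pinning down the tight FFD instance with $\mathrm{OPT}=9k+6$ together with a \emph{perfect} optimal packing (needed so that $\MMS{9k+6}{i}(M)=-1$ on the nose); I would handle this by invoking the known tight worst-case constructions for First-Fit-Decreasing and, if the literal instance carries $\varepsilon$-slack, topping up each optimal bin to capacity $1$ with negligible chores processed last, which leaves the FFD trajectory unchanged.
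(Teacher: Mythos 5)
Your core argument is the same as the paper's: with identical valuations and a common threshold, \cref{alg:chores_bagfill} coincides with First-Fit-Decreasing, and instantiating the tight FFD instances of \citet{dosa2007tight} and \citet{dosa2013tight} --- optimal packing into $9k+6$ unit bins versus $11k+8$ FFD bins --- with $n=11k+7$ agents and thresholds equal to the $1$-out-of-$(9k+6)$ MMS leaves a chore unallocated. Your additional care in normalizing the instance so that every optimal bin is exactly full (hence $\MMS{9k+6}{i}(M)=-1$ exactly), and your remark that a $1$-out-of-$(9k+6)$ MMS allocation still \emph{exists}, are details the paper leaves implicit; they strengthen rather than change the argument.

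Where you diverge is in quantifying over \emph{all} threshold vectors, and that is where your proof has a genuine gap --- one you yourself flagged. The paper only considers thresholds at least the agents' $1$-out-of-$(9k+6)$ MMS (the only thresholds under which the algorithm's output certifies that bound, mirroring how \cref{lem:Chores_ThreeFourths} is used), and in effect exhibits failure at the natural choice $\beta_i=\MMS{9k+6}{i}(M)$; it never attempts your universal claim. Your treatment of caps above $1$ is sound: if the realized load ever exceeds $1$ the recipient's MMS is violated, and otherwise an easy induction shows the bundle coincides with the corresponding FFD bin. But for caps below $1$ the argument fails twice. First, a bundle filled greedily under cap $C<1$ is \emph{not} ``weakly less than the FFD bin'': it can skip a large chore that the unit-capacity bin would take and absorb smaller ones instead (chores of size $0.6$ and $0.5$ with $C=0.5$ give the bundle $\{0.5\}$, whereas the FFD bin is $\{0.6\}$), so your invariant ``every closed bundle equals the matching FFD bin'' collapses at the first under-capped bundle and the residual chore sets become incomparable. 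Second, even granting a subset relation, ``packing less can only increase the number of bundles required'' is precisely the kind of monotonicity that FFD-style greedy packing is known to violate (bin-packing anomalies: deleting items or perturbing capacities can move the bin count in the ``wrong'' direction). Since the optimal packing needs only $9k+6<n$ bins, some allocation into $n$ load-at-most-$1$ bundles certainly exists, so ruling out every cap sequence would require an argument tailored to the structure of the Dósa instances; without it, you should retreat to the paper's quantification, for which your construction already suffices.
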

\begin{proof}
When all agents have the same valuation and the same threshold, 
Algorithm \ref{alg:chores_bagfill} reduces to an algorithm for bin-packing known as \emph{First Fit Decreasing} (\emph{FFD}) \citep{johnson1973near,baker1985new}.
FFD sorts the chores by descending value, and allocates each chore to the first (smallest-index) agent who can take it without going over the threshold.
Algorithm \ref{alg:chores_bagfill} (with identical valuations and thresholds) does exactly the same, only in a different order: instead of making a single pass over all the chores and filling all bins simultaneously, 
it makes $n$ passes over the chores, and fills each bin in turn with the chores that would be inserted to it in that single pass.

\citet{dosa2007tight} and \citet{dosa2013tight} have shown that, for every integer $k\geq 1$, there is a bin packing instance in which the optimal packing needs $9k+6$ bins but FFD needs $11k+8$ bins.
We construct a chore allocation instance with $n=11k+7$ agents with identical valuations, taken from that bin-packing instance. Assume 
 that the agents' thresholds are at least their $1$-out-of-$(9k+6)$ MMS.
Then, after Algorithm \ref{alg:chores_bagfill} allocates bundles to all $n$ agents, some chores may remain unallocated.
\end{proof}

Consider \cref{prop:upper-bound-n} with $k=0$ and $n = 7$. By \cref{thm:Chores_ThreeFourths}, our algorithm achieves $\floor{3n/4} = 5$ ordinal approximation. This bound is tight since we cannot guarantee to all agents their 1-out-of-$6$ MMS. 
We present this tight example below.


\begin{example}[A tight example for Algorithm \ref{alg:chores_bagfill}]
\label{prop:upper-bound-7}
Consider an instance with $n=7$ agents and $m=20$ chores valued as follows for all agents: 
four chores valued at $-201$, four chores valued at $-102$, four chores valued at $-101$, and eight chores valued at $-98$. 
For each agent, the $1$-out-of-$6$ MMS partition contains the following bundles with the MMS value of $-400$:
\begin{itemize}
\item $4$ bundles of chores with values $\{-201,-101,-98\}$;
\item $2$ bundles of chores with values $\{-102,-102,-98,-98\}$.
\end{itemize}

With the threshold values set as -400, Algorithm \ref{alg:chores_bagfill} generates the following bundles:
\begin{itemize}
\item 4 bundles with chores $\{-201,-102\}$;
\item 1 bundle  with chores $\{-101,-101,-101\}$;
\item 1 bundle  with chores $\{-101,-98,-98,-98\}$;
\item 1 bundle  with chores $\{-98,-98,-98,-98\}$.
\end{itemize}
After allocating these 7 bundles, a chore with the value of $-98$ remains unallocated and cannot be added to any of the above bundles since it would violate the threshold of $-400$.
\end{example}

\section{Polynomial-time Approximations} \label{sec:polyApprox}
In this section, we develop an efficient approximation algorithm that achieves $1$-out-of-$\floor{\floor{\frac{3 n}{4}} - O(\log{n})}$ MMS for any chores instance. We rely on Algorithm~\ref{alg:chores_bagfill} while utilizing an efficient approximation algorithm to find reasonable threshold values. 


This result provides an interesting computational contrast between multiplicative and ordinal approximations of MMS for allocating chores: multiplicative approximations require exact MMS values, which can be seen as a \textit{job scheduling} problem where the goal is to minimize the makespan (the maximum completion time of a machine). However, ordinal MMS approximation on chore instances can be modeled as a combinatorial problem of \emph{bin packing} (see \citet{korte2018bin} for a detailed survey) where the goal is to minimize the number of bins subject to an upper bound on the total size of items in each bin.

While both problems are NP-hard, they differ in the approximation algorithms available for them. The job scheduling problem has polynomial-time approximation schemes (PTAS) \citep{woeginger1997polynomial}, but their runtime is exponential in the approximation accuracy $1/\epsilon$. 
On the other hand, the bin packing problem used for our ordinal MMS approximation admits \emph{additive} approximation algorithms. 

In particular, we use an algorithm by \citet{hoberg2017logarithmic}, which we call \emph{Algorithm HR}.
Algorithm HR takes as input a bin-packing instance $I$,
and returns a packing with at most $\lceil OPT(I)+a\cdot\log(OPT(I))\rceil$ bins (for some fixed constant $a$) in time polynomial in $m$ (the number of input numbers in $I$), where $OPT(I)$ denotes the smallest possible number of bins for $I$.
We combine Algorithm HR with binary search on the bin size.\footnote{Similar search techniques have been used for \textit{MultiFit} scheduling algorithms \citep{coffman1978application} and the dual approximation scheme of \citet{hochbaum1987using}.}

To efficiently apply binary search, we assume in this section that the values of chores are negative \textit{integers} with a bounded binary representation. The run-time of our algorithm will be polynomial in the size of the binary representation of the input.

\begin{algorithm}[t]
\small
\caption{\small{Computing an approximate MMS value}}
\label{alg:binary-search}
\KwIn{
An integer $d\geq 1$;
a single agent with value function $v_i$ over a set of chores $M$; all values are negative integers.
}
\KwOut{
A number  $\beta_i$ in the interval $[\MMS{\floor{d-\log{d}}}{i}(M) , \MMS{d}{i}(M)]$.
}
\Comment{Construct a bin-packing instance:}
Let $S := \big\{ - v_i(c) ~~ \big| ~~ c\in M \big\}$ \;
\Comment{Initialize a lower and an upper bound for the bin size:}
Let $L := 0$ \;
Let $U := (\sum S)$ rounded up to the nearest power of $2$\;
\Comment{Run binary search:}
\While{$U > L+1$}
{
Let $b := (U+L)/2$ \;
Run Algorithm HR \citep{hoberg2017logarithmic} on instance $S$ with bin-size $b$  \;
\If{at most $d$ bins are used}{
Let $U := b$; \hskip 1cm \tcp{Try smaller bins}
}
\Else{
Let $L := b$; \hskip 1cm   \tcp{Try larger bins}
}
}
\Return{$-U$.}
\end{algorithm}

\begin{lemma}
\label{lem:binary-search}
Given an additive chores instance with integer values, for any integer $d\geq 1$ and agent $i$, it is possible to compute a number $\beta_i$ for which 
\begin{align*}
\MMS{\floor{d-a\cdot \log{d}}}{i}(M) 
\leq
\beta_i
\leq
 \MMS{d}{i}(M) \leq 0,
\end{align*}
in time polynomial in the size of binary representation of the input.
\end{lemma}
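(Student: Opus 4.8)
The plan is to exploit the duality between the maximin share and bin packing. First I would observe that, because chores have non-positive values, the $1$-out-of-$d$ maximin share problem for agent $i$ is the same as a bin-packing problem on the item sizes $S = \{-v_i(c) : c \in M\}$: partitioning $M$ into $d$ bundles so as to maximize the least bundle value coincides with packing $S$ into $d$ bins so as to minimize the largest bin load. Concretely, letting $OPT_b$ denote the minimum number of capacity-$b$ bins needed to pack $S$, the map $b \mapsto OPT_b$ is non-increasing, and
\[
\MMS{d}{i}(M) = -b^*_d, \qquad b^*_d := \min\{b : OPT_b \leq d\}.
\]
Thus approximating $\MMS{d}{i}(M)$ reduces to bracketing the threshold capacity $b^*_d$, which is precisely what the binary search of Algorithm~\ref{alg:binary-search} does, using Algorithm HR as its packing oracle.

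Next I would verify the loop invariant that Algorithm HR uses at most $d$ bins at capacity $U$ and more than $d$ bins at capacity $L$ (with $L = 0$ as the trivial lower endpoint). This holds at initialization, since at $U \geq \sum S$ a single bin suffices and $1 \leq d$, and it is preserved by every update; at termination $U = L+1$, so $U$ is the smallest integer capacity at which HR packs into at most $d$ bins, and the returned value is $\beta_i = -U$. For the upper bound $\beta_i \leq \MMS{d}{i}(M)$, I would use that HR never uses fewer than $OPT$ bins: since HR uses at most $d$ bins at capacity $U$, we get $OPT_U \leq d$, hence $U \geq b^*_d$ and $\beta_i = -U \leq -b^*_d = \MMS{d}{i}(M)$.

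For the lower bound I would use the other half of the invariant. At capacity $L = U-1$ Algorithm HR uses more than $d$ bins, so its guarantee gives $\lceil OPT_L + a\log OPT_L\rceil > d$, hence $OPT_L + a\log OPT_L > d$. A short case split then finishes: either $OPT_L > d$ outright (and $OPT_L > d \geq d - a\log d$ since $a\log d \geq 0$), or $OPT_L \leq d$ so that $\log OPT_L \leq \log d$ and $OPT_L > d - a\log d$; in both cases $OPT_L > d - a\log d \geq \floor{d - a\log d}$. Writing $d' = \floor{d - a\log d}$, this says capacity $L$ cannot be packed into $d'$ bins, so $L < b^*_{d'}$ and therefore $U \leq b^*_{d'}$, giving $\beta_i = -U \geq -b^*_{d'} = \MMS{d'}{i}(M)$. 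Finally, for the running time, the search interval $[0,U]$ has initial width $U \leq \sum_c |v_i(c)|$, whose bit-length is polynomial in the binary representation of the input, so the binary search performs $O(\log U)$ iterations, each a single polynomial-time invocation of Algorithm HR.

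The hard part will be the lower-bound step: converting the input-dependent logarithmic slack $a\log OPT_L$ in HR's guarantee into the clean quantity $a\log d$ that appears in the statement. This is exactly what the case analysis on whether $OPT_L \leq d$ accomplishes, and it is the only place where the additive (rather than multiplicative) nature of HR's approximation is indispensable. I would also need to check the boundary cases of the binary search carefully — in particular the terminal case $U = 1$, and the handling of chores of value $0$ which contribute size-$0$ items — to ensure that $b^*_{d'} \geq 1$ there and that the claimed interval $[\MMS{d'}{i}(M),\MMS{d}{i}(M)]$ is non-vacuous, so the returned $\beta_i$ always lands inside it.
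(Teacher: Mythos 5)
Your proposal is correct and takes essentially the same route as the paper's proof: the same reduction to a bin-packing instance with sizes $-v_i(c)$, the same binary search maintaining the invariants that Algorithm HR needs at most $d$ bins at capacity $U$ and more than $d$ bins at capacity $L$, the same use of integrality at termination ($U=L+1$), and the same runtime accounting. The only cosmetic difference is in the lower bound: you do a case split on whether $OPT_L \leq d$ to turn $a\log OPT_L$ into $a\log d$, whereas the paper argues by contradiction via the inequality $\ceil{\floor{d-a\cdot\log d}+a\cdot\log\floor{d-a\cdot\log d}}\leq d$; these are equivalent formulations of the same step.
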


\begin{proof}
We start by applying Algorithm \ref{alg:binary-search}.
The algorithm converts the chores allocation instance to a bin-packing instance, where each chore $c\in M$ is converted to an input of size $|v_i(c)|$.
Then it applies binary search with lower bound $L$ and upper bound $U$.
Throughout the search, the following invariants are maintained:
\begin{enumerate}
\item $U > L \geq 0$;
\item 
\label{inv:U}
Algorithm HR with bin-size $U$ needs at most $d$ bins;
\item 
\label{inv:L}
Algorithm HR with bin-size $L$ needs more than $d$ bins.
\end{enumerate}
The invariants are obviously true at initialization, and they are maintained by the way $U$ and $L$ are updated.
Let $\beta_i$ be the returned value, that is, the value of $-U$ once the algorithm terminates.
By the termination condition, at this point $U=L+1$.

Invariant \eqref{inv:U} implies that 
there exists a partition of chores into $d$ bins, in which 
the total absolute value of each bin is at most $U$, 
so the total value is at least $-U$. 
Therefore, $\MMS{d}{i}(M) \geq -U = \beta_i$.

Invariant \eqref{inv:L} implies that
there is no partition of the chores into $\floor{d-a\cdot\log{d}}$ or fewer bins, in which the total absolute value of all bins is at most $L$---otherwise the HR algorithm could have filled at most $\ceil{\floor{d-a\cdot\log{d}}+a\cdot\log\floor{d-a\cdot\log{d}}}\leq d$ bins of size $L$.
Therefore, 
$\MMS{\floor{d-a\cdot\log{d}}}{i}(M) < -L$. 
Since we assumed that all chores' values are integers, 
this implies 
$\MMS{\floor{d-a\cdot\log{d}}}{i}(M) \leq
-L-1 = -U = \beta_i$. 

The binary search uses $\lceil \log_2(\sum S)\rceil$ iterations, which is polynomial in the size of the binary representation of the input. Each iteration runs the HR algorithm, whose run-time is polynomial in $m$. 
This concludes the proof of the lemma.
\end{proof}

\begin{theorem}
\label{thm:chores-approx}
Given an additive chores instance with integer values,
it is possible to find in polynomial time, for some fixed positive constant $a$,
a
$1$-out-of-$\floor{\floor{\frac{3 n}{4}} -a\cdot\log{\floor{\frac{3 n}{4}}}}$ MMS allocation
\end{theorem}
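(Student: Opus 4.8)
The plan is to instantiate Algorithm~\ref{alg:chores_bagfill} with the approximate thresholds produced by Algorithm~\ref{alg:binary-search}, and then to certify both completeness (all chores allocated) and the ordinal guarantee by composing Lemma~\ref{lem:Chores_ThreeFourths} with Lemma~\ref{lem:binary-search}. First I would reduce to an ordered instance: by Lemma~\ref{lem:order} it suffices to produce the desired allocation on the ordered instance $I' = \ins{N,M,V'}$ obtained by sorting each agent's values by descending absolute value. This keeps all values integral, can be done in polynomial time, and any allocation on $I'$ lifts back to $I$ with weakly larger values for every agent.

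Next, set $d := \floor{\frac{3n}{4}}$ and, for each agent $i$, run Algorithm~\ref{alg:binary-search} with this $d$ to obtain a threshold $\beta_i$. By Lemma~\ref{lem:binary-search}, each $\beta_i$ is computed in time polynomial in the binary representation of the input and satisfies
$$\MMS{\floor{d-a\cdot\log d}}{i}(M) \leq \beta_i \leq \MMS{d}{i}(M) \leq 0.$$
Since there are only $n$ agents, computing all thresholds remains polynomial. I would then run Algorithm~\ref{alg:chores_bagfill} on $I'$ with the threshold vector $(\beta_i)_{i=1}^{n}$; by construction every agent receives a bundle $A_i$ with $v_i(A_i)\geq\beta_i$, and this step is likewise polynomial, so the overall procedure runs in polynomial time.

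Two checks remain. For completeness, observe that the upper bound $\beta_i\leq\MMS{d}{i}(M)=\MMS{\floor{\frac{3n}{4}}}{i}(M)$ is exactly the hypothesis of Lemma~\ref{lem:Chores_ThreeFourths}, so all chores are allocated upon termination and the allocation is complete. For the ordinal guarantee, write $d' := \floor{\floor{\frac{3n}{4}}-a\cdot\log\floor{\frac{3n}{4}}}=\floor{d-a\cdot\log d}$; the lower bound from Lemma~\ref{lem:binary-search} yields $v'_i(A'_i)\geq\beta_i\geq\MMS{d'}{i}(M)$ for every $i$. Lifting back to $I$ via Lemma~\ref{lem:order} only increases each agent's value, and the MMS values are identical in $I$ and $I'$ since they depend only on each agent's multiset of item values; hence $v_i(A_i)\geq\MMS{d'}{i}(M)$, which is precisely the claimed $1$-out-of-$d'$ MMS guarantee.

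There is no serious obstacle: the theorem is a clean composition of the completeness argument (Lemma~\ref{lem:Chores_ThreeFourths}) with the polynomial-time bin-packing approximation (Lemma~\ref{lem:binary-search}). The one point that needs care is matching the two sides of the sandwich in Lemma~\ref{lem:binary-search} to the two distinct roles the thresholds play: its upper bound $\MMS{d}{i}(M)$ is exactly what Lemma~\ref{lem:Chores_ThreeFourths} requires for completeness, while its lower bound $\MMS{\floor{d-a\cdot\log d}}{i}(M)$ is what delivers the ordinal approximation, so that the same computed $\beta_i$ simultaneously certifies both properties.
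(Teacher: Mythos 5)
Your proposal is correct and follows essentially the same route as the paper: compute per-agent thresholds via Algorithm~\ref{alg:binary-search} with $d=\floor{\frac{3n}{4}}$, run Algorithm~\ref{alg:chores_bagfill} with them, and compose the upper half of the sandwich in Lemma~\ref{lem:binary-search} with Lemma~\ref{lem:Chores_ThreeFourths} for completeness and the lower half for the ordinal guarantee. The only difference is that you make explicit the reduction to an ordered instance via Lemma~\ref{lem:order} (and the observation that MMS values are preserved under reordering), which the paper handles implicitly by restricting attention to ordered instances throughout.
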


\begin{proof}
We use Algorithm \ref{alg:chores_polytime} which starts by computing a threshold value $\beta_i$ for each agent $i\in N$ using Algorithm \ref{alg:binary-search}.
Then, it applies Algorithm \ref{alg:chores_bagfill} with the resulting thresholds for allocating the chores.

Lemma \ref{lem:binary-search} implies that $\beta_i \leq \MMS{d}{i}(M)$ with $d = \floor{\frac{3n}{4}}$ for all $i\in N$.
By \Cref{lem:Chores_ThreeFourths}, 
this implies that Algorithm \ref{alg:chores_bagfill} allocates all the chores.
Therefore, Algorithm \ref{alg:chores_bagfill} yields a complete allocation in which the value of each agent $i$ is at least $\beta_i$. By Lemma \ref{lem:binary-search}, this value is at least $\MMS{\floor{\floor{\frac{3n}{4}}-a\cdot\log{\floor{\frac{3n}{4}}}}}{i}(M)$, concluding the proof.
\end{proof}

\begin{algorithm}[t]
\small
\caption{\small{Algorithm for ordinal MMS approximation
in polynomial time}}
\label{alg:chores_polytime}
\KwIn{An ordered chores instance $I=\ins{N, M, V}$.}
\KwOut{Allocation $A = (A_1,\ldots,A_n)$ satisfying $v_{i}(A_{i}) \geq \MMS{d}{i}(M)$ for all $i\in N$, such that $d = \floor{\floor{\frac{3n}{4}}-a\cdot\log{\floor{\frac{3n}{4}}}}$}
\For{each agent $i\in N$:}{
Run Algorithm \ref{alg:binary-search}
with $d = \floor{\frac{3n}{4}}$ and valuation $v_i$\;
Let $\beta_i$ be the returned value\;
}
Run Algorithm \ref{alg:chores_bagfill} on $I$ with the threshold values $(\beta_1,\ldots,\beta_n)$.
\end{algorithm}


\section{Discussion}
\cref{thm:Chores_ThreeFourths} shows that, asymptotically (when $n$ is large),  Algorithm \ref{alg:chores_bagfill} guarantees 1-out-of-$(\approx 0.75 n)$ MMS. \cref{prop:upper-bound-n}, however, shows that this bound cannot be improved to 1-out-of-$(\approx 0.81 n)$ MMS using this algorithm. An immediate, but challenging, research direction is closing this approximation gap and developing polynomial-time algorithms beyond those presented in this paper. 

All of our results use the same algorithm (Algorithm \ref{alg:chores_bagfill}) to allocate the chores, but with different threshold values.
This approach is ``pluralistic'' in that it allows each agent to choose between these thresholds: each agent may choose whether to settle for a lower but easy-to-compute threshold of  Section \ref{sec:Chores_TwoThirds}, or put an extra effort to compute a higher thresholds of Sections \ref{sec:Chores_ThreeFourths}
or \ref{sec:polyApprox}. This pluralistic approach may be useful in other fair division settings.

\begin{acks}
Hadi Hosseini acknowledges support from NSF IIS grants \#2052488 and \#2107173.
Erel Segal-Halevi is supported by the ISF grant 712/20.
We are grateful to anonymous referees of EC 2021 and AAMAS 2022 for their valuable feedback.
\end{acks}




\bibliographystyle{ACM-Reference-Format} 
\bibliography{\ifnum\pdfstrcmp{\jobname}{output}=0 ref\else ../ref\fi}


\appendix
\begin{center}\huge
   \textbf{Appendix}
\end{center}

\section{Robustness of ordinal MMS approximations} \label{app:robust}
Consider an instance with $n=4$ agents and four chores $\{c_1,c_2,c_3,c_4\}$. Suppose that some agent values the chores at $-1,-5,-7,-9$ respectively.
We compare two MMS approximations:
\begin{itemize}
\item Cardinal: $4/3$ of the $1$-out-of-$n$ MMS (the guarantee of \cite{barman2017approximation}).  
\item Ordinal: $1$-out-of-$\floor{\frac{3n}{4}}$ MMS. 
\end{itemize} 
Note that the two approximations are comparable, since both of them can be seen as approximating $\frac{4}{3n}$ of the total value.
However, the cardinal approximation is sensitive to small changes in the chore values. For the given instance, the $1$-out-of-$n$ MMS is $-9$, so the cardinal approximation guarantee is $-12$; an algorithm with this guarantee is allowed to give our agent the bundle $\{c_2,c_3\}$.
But if the value of $c_4$ changes to $-9+3\epsilon$, then the cardinal approximation guarantee changes to $-12+4\epsilon$, so the bundle $\{c_2,c_3\}$ no longer satisfies the guarantee, and the algorithm must give our agent a better bundle. Thus, the validity of  a bundle is affected by infinitesimally-small change in an \emph{irrelevant} chore.

In contrast, the ordinal approximation guarantee in the given instance is $1$-out-of-$3$ MMS, which is $-9$. An algorithm with this guarantee can give our agent either the bundle $\{c_4\}$ or any better bundle. Any change in a chore value---as long as it does not change the order between the bundle values---does not affect the validity of the bundle $\{c_4\}$.

\section{Relations between approximate fairness notions}
\label{app:relations}
Relations between various approximate fairness notions were studied by \citet{amanatidis2018comparing} for goods. As far as we know, such relations were not studied for chores yet.
The following propositions show that, in general, the various fairness notions are independent.

\begin{table*}[t]
\begin{tabular}{|c|c|c|c|c|}
\hline
This $\downarrow$ implies $\rightarrow$ 
& \textbf{Ordinal MMS} & \textbf{Multiplicative MMS} & \textbf{EF1} & \textbf{EFx} \\
\hline
\textbf{Ordinal MMS} 
& - 
& at most $2$ \eqref{prop:ordinal-implies-cardinal} 
& None \eqref{prop:mms-no-ef1}
&  None \eqref{prop:mms-no-ef1}
\\
\hline
\textbf{Multiplicative MMS} 
&  None \eqref{prop:cardinal-no-ordinal}
&  -
& None \eqref{prop:mms-no-ef1}
& None \eqref{prop:mms-no-ef1}
\\
\hline
\textbf{EF1} 
&  None \eqref{prop:ef1-no-ordinal}
&  [not checked]
&  -
&  - \\
\hline
\textbf{EFx} 
&  At most $n/2$ \eqref{prop:efx-implies-ordinal}
&  [not checked]
&  -
&  -
\\
\hline
\end{tabular}
\caption{
\label{tab:relations}
High-level summary of implication relations between fairness notions for chores.
Each cell indicates to what extent the fairness notion in the row implies the fairness notion in the column.
The number in parentheses is the proposition number.
}
\end{table*}

\begin{proposition}
\label{prop:cardinal-no-ordinal}
 For any integers $n, k, d\geq 2$, the ordinal approximation 1-out-of-$d$ MMS is not implied by cardinal approximation $(1+1/k)$-fraction 1-out-of-$n$ MMS.
\end{proposition}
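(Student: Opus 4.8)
The plan is to construct, for any target integers $n, k, d \geq 2$, a single explicit chores instance that satisfies the cardinal guarantee $(1+1/k)$-fraction of $1$-out-of-$n$ MMS, yet fails even a very weak ordinal guarantee, namely $1$-out-of-$d$ MMS. Since both notions are invariant under relabeling and we only need existence of a bad instance, I would use a single agent's valuation (identical agents) and pick chores whose values force the $1$-out-of-$d$ MMS to be strictly worse than any bundle the cardinal guarantee would certify as acceptable. The natural move is to make the proportional/MMS arithmetic transparent by using many identical small chores so that $\MMS{n}{i}$ and $\MMS{d}{i}$ are exactly computable by even division, echoing \cref{exm:ordinal-vs-cardinal}.

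First I would set up the instance. Take identical agents, all valuing each of $m$ identical chores at $-1$, and choose $m$ so that it is divisible by $n$ but creates a large gap when divided by $d < n$. Concretely, with $m$ a multiple of $n$ we get $\MMS{n}{i}(M) = -m/n$, so the cardinal target is $(1+1/k)\cdot(-m/n) = -\frac{(k+1)m}{kn}$; a bundle is cardinally acceptable to this agent precisely when its value is at least this threshold, i.e.\ when it contains at most $\floor{\frac{(k+1)m}{kn}}$ chores. Meanwhile $\MMS{d}{i}(M) = -\ceil{m/d}$, which is the ordinal requirement. The key inequality to engineer is that the cardinally-largest acceptable bundle size, summed across $n$ agents, cannot cover all $m$ chores, so some complete allocation that respects the cardinal bound will necessarily hand some agent a bundle violating the ordinal bound — or, more directly, I would exhibit one complete allocation that meets the cardinal guarantee for all $n$ agents but gives at least one agent strictly fewer than $m$ chores spread so that that agent's bundle is worse than $\MMS{d}{i}$.

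The cleanest route is the direct one: produce an explicit allocation that is cardinally fair but ordinally unfair. With identical valuations, distribute the $m$ chores as evenly as possible among the $n$ agents, so each agent gets either $\floor{m/n}$ or $\ceil{m/n}$ chores; every such bundle has value at least $-\ceil{m/n} \geq -\frac{(k+1)m}{kn}$ once $m$ is chosen large enough relative to $k$ (so that $\ceil{m/n} \leq \frac{(k+1)m}{kn}$, which holds for $m \geq kn$ since then $m/n + 1 \leq (k+1)m/(kn)$). Thus the allocation satisfies the cardinal guarantee. But $\ceil{m/n} < \ceil{m/d}$ whenever $d < n$ and $m$ is chosen so the two ceilings differ — e.g.\ take $m = \mathrm{lcm}(n,d)\cdot t$ for suitable $t$, making $m/n$ and $m/d$ both integers with $m/d - m/n = m(n-d)/(nd) \geq 1$. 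Hence each agent's bundle value $-\ceil{m/n} = -m/n$ is strictly greater than $-m/d = \MMS{d}{i}(M)$ is false in the wrong direction, so I must instead flip the construction: I want a cardinally-fair allocation whose \emph{worst} bundle falls below $\MMS{d}{i}$, which requires an \emph{uneven} allocation.

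The main obstacle, and the step deserving the most care, is this tension: an allocation fair in the cardinal sense may be forced to also be fair ordinally if $(1+1/k)\MMS{n}{i}$ happens to dominate $\MMS{d}{i}$. I would resolve it by choosing $m$ so that $\MMS{d}{i}(M) = -\ceil{m/d}$ is strictly smaller (more negative) than the cardinal threshold $-\floor{\frac{(k+1)m}{kn}}$, which is achievable by taking $m$ large and exploiting $n > d$ so that $m/d$ exceeds $\frac{(k+1)m}{kn}$ precisely when $\frac{1}{d} > \frac{k+1}{kn}$, i.e.\ $kn > (k+1)d$. When this inequality holds I can then build a complete allocation handing every agent a bundle of size exactly $\floor{\frac{(k+1)m}{kn}}$ or just below (cardinally acceptable) while the pigeonhole forces some agent to receive more than $m/d$ chores (ordinally unacceptable), giving the separation. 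Since the proposition universally quantifies over $n, k, d \geq 2$, the delicate part is handling the parameter regimes where $kn \leq (k+1)d$; there I would introduce a small number of differently-valued chores (as in \cref{exm:ordinal-vs-cardinal} and the $100$-chore examples of \cref{sec:reductions}) to decouple the proportional share from the MMS value and widen the gap, ensuring the construction works for \emph{all} admissible $n, k, d$.
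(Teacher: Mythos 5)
There is a genuine gap, and it sits at the exact step you flagged as ``deserving the most care'': your key inequality is reversed. For cardinal fairness to fail to imply ordinal fairness, you need an allocation value $v$ with $(1+1/k)\MMS{n}{i}(M) \le v < \MMS{d}{i}(M)$, which is possible only when the \emph{cardinal} threshold is the more negative of the two, i.e. $(1+1/k)\MMS{n}{i}(M) < \MMS{d}{i}(M)$. With $m$ identical chores of value $-1$ and clean divisibility this reads $\frac{(k+1)m}{kn} > \frac{m}{d}$, i.e. $(k+1)d > kn$. You instead require $\MMS{d}{i}(M)$ to be more negative than the cardinal threshold and land in the opposite regime $kn > (k+1)d$. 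In that regime your construction cannot produce a counterexample: every cardinally acceptable bundle has at most $\floor{\frac{(k+1)m}{kn}}$ chores, which is strictly fewer than $m/d$, hence value strictly above $-m/d \ge \MMS{d}{i}(M)$, so \emph{every} allocation satisfying the cardinal guarantee automatically satisfies the ordinal one. The pigeonhole claim is also false: with $d < n$ nothing forces any agent to take more than $m/d$ chores (pigeonhole only forces $m/n$), and since $n \cdot \frac{(k+1)m}{kn} = \frac{(k+1)m}{k} > m$, all chores can indeed be packed into cardinally acceptable bundles.

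Even after correcting the sign, identical chores separate the two notions only when $d > \frac{kn}{k+1}$; for small $d$ relative to $n$ (e.g. $d=2$, $n$ large) they provably cannot, and this is precisely the case your final sentence defers to ``introduce a small number of differently-valued chores''---which is the actual crux of the proposition, since it is quantified over \emph{all} $n,k,d \ge 2$. The paper resolves it with one uniform instance: a single hard chore of value $-k$ and $k$ easy chores of value $-1$. Any partition into $d \ge 2$ bundles has some bundle containing the hard chore, so $\MMS{d}{i}(M) = \MMS{n}{i}(M) = -k$ \emph{independently of} $d$; the cardinal threshold $(1+1/k)(-k) = -(k+1)$ therefore sits strictly below every ordinal threshold simultaneously. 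Giving agent $1$ the hard chore plus one easy chore (value exactly $-(k+1)$) and spreading the remaining $k-1$ easy chores among the others meets the cardinal guarantee for everyone while violating $1$-out-of-$d$ MMS for agent $1$, for all $d \ge 2$ at once. The idea your proposal is missing is to make the MMS value insensitive to $d$ by concentrating it in one large chore, rather than trying to control the ratio between $m/d$ and $\frac{(k+1)m}{kn}$.
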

\begin{proof}
Given integers $n,k\geq 2$, consider an instance with $n$ agents with identical valuations, 
one ``hard'' chore with value $-k$, 
and $k$ ``easy'' chores with value $-1$.
For any integers $d,n\geq 2$, we have $MMS^d = MMS^n = -k$, since one bundle must hold the hard chore.

Consider an allocation in which agent 1 gets the hard chore and one easy chore, and the other $k-1$ easy chores are divided among agents $2,\ldots,n$ in a balanced way (e.g. using round-robin).
The value of agent 1 is $-k-1 = (1+1/k)\cdot MMS^n$.
The value of every other agent is at least $(k-1)\cdot (-1) = -k+1 >MMS^n$. 
So the allocation satisfies the cardinal approximation $(1+1/k)$-fraction 1-out-of-$n$ MMS. 
However, the allocation does not satisfy 1-out-of-$d$ MMS for agent 1.
\end{proof}

\begin{proposition}
\label{prop:ef1-no-ordinal}
For any $n, d\geq 2$, the ordinal approximation 1-out-of-$d$ MMS is not implied by EF1 or PROP1 allocation among $n$ agents.
\end{proposition}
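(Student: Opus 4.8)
The goal is to show that for any $n,d \geq 2$, there exists an instance with an EF1 (equivalently, PROP1) allocation that fails to guarantee $1$-out-of-$d$ MMS to some agent. The plan is to exhibit a single instance where every chore is ``large'' relative to the MMS bundles, so that EF1 (which only tolerates a one-chore discrepancy in the envy comparison) is easy to satisfy while $1$-out-of-$d$ MMS is badly violated for one unlucky agent.

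First I would reuse the structure of the instance from \cref{prop:cardinal-no-ordinal}: take $n$ agents with identical additive valuations, and chores that are all sufficiently ``heavy'' so that bundling even two of them together overshoots the $1$-out-of-$d$ MMS threshold. A clean choice is to let there be exactly $n$ chores, all valued at $-1$, and then add one extra chore so that some agent is forced to receive two chores. With $m = n+1$ identical chores of value $-1$ among $n$ agents, the $1$-out-of-$d$ MMS for $d \geq 2$ is at least $-\lceil (n+1)/d \rceil \geq -1$ only when $d \geq n+1$; for the relevant regime one computes $\MMS{d}{i} = -\lceil (n+1)/d\rceil$, and in particular for $d \geq 2$ this is strictly greater than $-(n+1)$ unless forced. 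The key is to pick the numbers so that the single agent holding two chores gets value $-2$, while $\MMS{d}{i} = -1$; this requires $d \geq n+1$ does not hold, so I would instead scale: make the chores heavy enough (as in the hard-chore construction) that $\MMS{d}{i} = -1$ yet two chores are unavoidable for one agent.

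The core argument then splits into two parts. For EF1: since any allocation of $m = n+1$ identical unit chores among $n$ agents gives one agent two chores and everyone else at most one (or zero), the agent with two chores envies an agent with fewer, but removing a single chore from its own bundle eliminates the envy; hence the allocation is EF1. The PROP1 claim follows similarly, since removing one chore brings each agent to at most its proportional share. For the failure of ordinal MMS: I would verify that $\MMS{d}{i}(M) = -1$ for the chosen parameters (each agent can partition the $n+1$ chores into $d$ bundles with the worst bundle containing a single chore of value $-1$, provided $d \geq 2$ and the totals work out), while the agent holding two chores receives $-2 < -1 = \MMS{d}{i}$, so $1$-out-of-$d$ MMS is violated.

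The main obstacle is getting the arithmetic of $\MMS{d}{i}$ to equal exactly $-1$ simultaneously with forcing a value-$-2$ bundle under an EF1 allocation, for \emph{all} $d \geq 2$ at once. A uniform construction with $m=n+1$ unit chores gives $\MMS{d}{i} = -\lceil (n+1)/d \rceil$, which equals $-1$ only for large $d$; to cover every $d \geq 2$ I expect I will need the heavy-chore trick (one chore of large value together with many unit chores, as in \cref{prop:cardinal-no-ordinal}) so that the MMS is pinned by the single unavoidable heavy chore and stays constant across all $d$, while still admitting an EF1 allocation in which one agent is saddled with an extra chore beyond the MMS threshold. Verifying that this allocation is genuinely EF1 (and PROP1) — not merely proportional up to one chore in a naive sense, but under the correct chores definition where we may remove a chore from the envying agent's own bundle — is the step I would take the most care with.
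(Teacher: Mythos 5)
Your proposal is correct and lands on essentially the same approach as the paper: the paper's proof likewise reuses the instance of \cref{prop:cardinal-no-ordinal} (one hard chore of value $-k$ plus $k$ unit chores, taking $k=n$), where the hard chore pins $\MMS{d}{i} = -k$ for every $d \geq 2$, assigns agent~1 the hard chore plus one unit chore, and verifies EF1 and PROP1 exactly as you describe---by removing the hard chore from the envying agent's own bundle. The only detail the paper fixes that you leave open is the parameter choice $k=n$, which makes each remaining agent receive exactly one unit chore, so EF1 among them is immediate.
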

\begin{proof}
Consider the instance and allocation of Prop. \ref{prop:cardinal-no-ordinal}, with $k=n$. The allocation is EF1, since every agent $2,\ldots,n$ receives one chore and is not envious. Agent 1 too does not envy after removing the hard chore. Similarly, the allocation satisfies proportionality after removing the hard chore. As mentioned above, it does not satisfy 1-out-of-$d$ MMS for agent 1.
\end{proof}

\begin{proposition}
\label{prop:mms-no-ef1}
For any integers $n\geq 3$ and $c\geq 1$, a 1-out-of-$n$ MMS allocation (even without approximations) does not guarantee any positive approximation of proportionality-up-to-$c$ items or envy-freeness-up-to-$c$ items.
\end{proposition}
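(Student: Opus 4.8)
The plan is to exhibit a family of instances, each carrying a genuine $1$-out-of-$n$ MMS allocation whose envy/proportionality quality degrades without bound, so that no finite (positive) approximation can be promised. I would use identical additive valuations and engineer the maximin share to be very negative through a single dominant chore; this is exactly what creates enough slack to support a highly unbalanced---but still MMS-valid---allocation that leaves several agents with empty bundles.

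Concretely, fix a large integer $H$ (to be sent to infinity) and take $n$ agents with identical valuations over $m=H+1$ chores: one \emph{heavy} chore of value $-H$, and $H$ \emph{unit} chores of value $-1$ each. First I would compute $\MMS{n}{i}(M)=-H$: every $n$-partition must place the heavy chore in some bundle, so its minimum bundle is at most $-H$; conversely the partition that isolates the heavy chore in one bundle and collects all $H$ unit chores in another (leaving the remaining bundles empty) attains exactly $-H$. Hence the maximin share is $-H$ for every agent.

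Next I would define the allocation $A$ by $A_1=\{\text{the }H\text{ unit chores}\}$, $A_2=\{\text{the heavy chore}\}$, and $A_3=\dots=A_n=\emptyset$, where the hypothesis $n\ge 3$ is used to guarantee an empty bundle. The bundle values are $-H$, $-H$, and $0$, all at least $\MMS{n}{i}(M)=-H$, so $A$ is an \emph{exact} $1$-out-of-$n$ MMS allocation. For the envy claim, agent $1$ compares its bundle to the empty bundle $A_n$: after deleting any set $C$ of at most $c$ chores from $A_1$ we still have $v_1(A_1\setminus C)=-(H-c)<0=v_1(A_n)$ whenever $H>c$, so agent $1$ envies agent $n$ even up to $c$ items. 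Since $v_1(A_n)=0$, no multiplicative factor can bridge a strictly negative value to $0$, and letting $H\to\infty$ the additive envy $H-c$ grows without bound; thus an MMS allocation yields no positive approximation of envy-freeness-up-to-$c$-items. The same instance settles proportionality: agent $1$'s proportional share is $v_1(M)/n=-2H/n$, while $v_1(A_1\setminus C)=-(H-c)$, and for $n\ge 3$ the deviation $\lvert -(H-c)-(-2H/n)\rvert = H(1-2/n)-c$ tends to infinity, so no bounded approximation of proportionality-up-to-$c$-items can hold either.

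The crux is the maximin-share computation together with the observation that the lopsided allocation is still MMS-valid: the single dominant chore is precisely what forces $\MMS{n}{i}$ to be as negative as a full bundle of $H$ unit chores, and that negativity is the slack that lets some agents receive nothing while the remaining bundles stay above threshold. The one delicate point in the conclusion is the envy comparison against a zero-valued (empty) bundle, where the impossibility of any multiplicative approximation stems from comparing a strictly negative quantity to $0$; to cover either reading of ``positive approximation,'' I would record the failure in both multiplicative and additive terms, as above.
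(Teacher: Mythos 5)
Your proof is correct and takes essentially the same approach as the paper's: identical valuations with a single heavy chore plus many unit chores, a lopsided exact-MMS allocation that leaves some agents with empty bundles (using $n\geq 3$), envy measured against a zero-valued empty bundle to kill any multiplicative approximation of envy-freeness-up-to-$c$, and a proportionality violation for the agent holding all unit chores. The only real difference is parametric --- the paper fixes the heavy chore's size at $k=3c+3$, while you send $H\to\infty$, which has the minor added benefit of making the additive violation of proportionality-up-to-$c$ grow without bound rather than merely exhibiting an exact violation.
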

\begin{proof}
Consider the instance of Prop. \ref{prop:cardinal-no-ordinal} with $k = 3c+3$.
Consider an allocation in which agent 1 gets the hard chore, agent 2 gets all $k$ easy chores, and agents $3,\ldots,n$ get nothing.
The value of every agent is at least $-k = MMS^n$, so the allocation satisfies 1-out-of-$n$ MMS fairness.
However, agent 2 envies agent 3 even after removing $k-1>c$ chores.
Moreover, the proportional value of all agents is $-2k/n \geq -\frac{2}{3}k = -(2c+2)$, 
so the allocation is not proportional for agent 2 even after removing $c = k/3-1$ chores.
\end{proof}

\begin{proposition}
\label{prop:ordinal-implies-cardinal}
For every integers $d\geq 1$ and $q\geq 1$,
if an allocation satisfies $1$-out-of-$d$ MMS, then it gives each agent at least $q$ times his $1$-out-of-$q d$ MMS.
The factors are tight.
\end{proposition}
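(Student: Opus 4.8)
The plan is to reduce the claim to a purely combinatorial inequality between the two maximin shares, namely $\MMS{d}{i}(M) \geq q\cdot\MMS{qd}{i}(M)$, and then chain it with the hypothesis. First I would fix an agent $i$ and let $(P_1,\ldots,P_{qd})$ be an optimal $qd$-partition witnessing $\MMS{qd}{i}(M)$, so that $v_i(P_j)\geq \MMS{qd}{i}(M)$ for every $j\in[qd]$. The key step is to \emph{merge} these $qd$ bundles into a $d$-partition by grouping them arbitrarily into $d$ super-bundles of exactly $q$ original bundles each, which is possible since $d \mid qd$. By additivity, each super-bundle has value equal to the sum of $q$ bundle values, each of which is at least $\MMS{qd}{i}(M)$; hence every super-bundle has value at least $q\cdot\MMS{qd}{i}(M)$. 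Since this particular $d$-partition already guarantees a minimum bundle value of at least $q\cdot\MMS{qd}{i}(M)$, and $\MMS{d}{i}(M)$ is the maximum over \emph{all} $d$-partitions of the minimum bundle value, we conclude $\MMS{d}{i}(M)\geq q\cdot\MMS{qd}{i}(M)$.

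Combining this with the hypothesis that the allocation $A$ satisfies $1$-out-of-$d$ MMS, for every agent $i$ we obtain $v_i(A_i)\geq \MMS{d}{i}(M)\geq q\cdot\MMS{qd}{i}(M)$, which is exactly the desired guarantee.

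For tightness, I would exhibit a single instance realizing equality throughout. Take $n=d$ agents with identical valuations over $m=qd$ chores, each valued $-1$. Here $\MMS{qd}{i}(M)=-1$ (one chore per bundle) while $\MMS{d}{i}(M)=-q$ (partition into $d$ bundles of $q$ chores), so the combinatorial inequality holds with equality. Any allocation satisfying $1$-out-of-$d$ MMS must give every agent a bundle of value at least $-q$, i.e. at most $q$ chores; since there are exactly $qd$ chores and $d$ agents, each agent receives exactly $q$ chores and thus value exactly $-q = q\cdot\MMS{qd}{i}(M)$. Hence the factor $q$ cannot be replaced by any larger constant.

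The argument involves no genuine obstacle; the only point requiring care is the direction of the inequalities, since all values and maximin shares are non-positive, so that multiplying $\MMS{qd}{i}(M)$ by $q\geq 1$ makes it \emph{smaller} (more negative)---consistent with $\MMS{d}{i}(M)$ being the weakly smaller of the two shares, as MMS is weakly increasing in the number of bundles.
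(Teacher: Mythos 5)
Your proof of the core inequality $\MMS{d}{i}(M)\geq q\cdot \MMS{qd}{i}(M)$ is exactly the paper's argument: take a $qd$-partition, group its parts into $d$ super-bundles of $q$ parts each, and use additivity; chaining with the hypothesis finishes the claim. Where you genuinely diverge is the tightness part, and the two arguments prove complementary things. You prove tightness of the \emph{multiplicative} factor: with $qd$ unit chores and $n=d$ agents, $\MMS{d}{i}=-q$ and $\MMS{qd}{i}=-1$, and every $1$-out-of-$d$ MMS allocation gives each agent exactly $q\cdot\MMS{qd}{i}$, so no better factor can be guaranteed. The paper instead proves tightness of the \emph{ordinal} parameter: with $qd+1$ unit chores, $\MMS{d}{i}=-(q+1)< -q = q\cdot\MMS{qd+1}{i}$, so $1$-out-of-$d$ MMS does not imply $q$ times $1$-out-of-$(qd+1)$ MMS. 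Neither example does the other's job: in your instance an extra bundle would simply be empty, so $\MMS{qd+1}{i}=\MMS{qd}{i}$ and nothing is said about increasing the parameter, while in the paper's instance (for $q\geq 2$) one has $\MMS{d}{i}=-(q+1)>-2q=q\cdot\MMS{qd}{i}$, so it does not exhibit equality in the factor. Both are legitimate readings of ``the factors are tight,'' and ideally one would give both examples.

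One sentence of yours is backwards, however: you conclude that ``the factor $q$ cannot be replaced by any larger constant.'' Since maximin shares of chores are non-positive, replacing $q$ by a \emph{larger} constant only weakens the guarantee, so that statement is trivially preserved; what your example actually shows is that $q$ cannot be replaced by any \emph{smaller} constant $q'<q$, because then each agent would be owed at least $q'\cdot\MMS{qd}{i}=-q'>-q$, whereas they receive exactly $-q$. The example is correct; the stated conclusion should be flipped.
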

\begin{proof}
Let $X_1,\ldots,X_{qd}$ be any partition of a set of chores into $q d$ parts, and let $v_{qd} := \min_j v(X_j)$.
Grouping the parts arbitrarily into $d$ groups of $q$ subsets yields a $d$-partition $Y_1,\ldots,Y_{d}$ in which the value of each part is at least $q\cdot v_{qd}$.
Therefore, the $MMS^d$ of any agent is at least $q$ times the agent's $MMS^{qd}$.

For tightness, consider an instance with $q d+1$ chores of value $-1$. Then $MMS^d = -q-1$ and $MMS^{qd+1}=-1$, so $MMS^d < q\cdot MMS^{qd+1}$.
\end{proof}
The most useful implication of the above proposition is for $q=2$, where it implies that the ordinal 1-out-of-$n/2$ MMS implies a $2$-factor multiplicative approximation of the 1-out-of-$n$ MMS.

\begin{proposition}
\label{prop:efx-implies-ordinal}
For any $n\geq 2$, an EFx allocation among $n$ agents (when it exists) satisfies 1-out-of-$(n+1)/2$ MMS, 
and may not satisfy 1-out-of-$d$ MMS for $d>(n+1)/2$.
\end{proposition}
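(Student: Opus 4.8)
The plan is to establish the two directions separately, using throughout the chores version of \textbf{EFx}: an allocation $A$ is EFx if for all agents $i,j$ and every chore $c\in A_i$ we have $v_i(A_i\setminus\{c\})\ge v_i(A_j)$. It is convenient to work with costs $|v_i(\cdot)|:=-v_i(\cdot)\ge 0$. For a fixed agent $i$, let $c^{\ast}$ be a least-costly chore in $A_i$ and set $s:=|v_i(c^{\ast})|$ and $C:=|v_i(A_i)|$; since removing $c^{\ast}$ is the weakest way to reduce $A_i$, EFx is equivalent to the family of inequalities $|v_i(A_j)|\ge C-s$ for all $j$. Note that $|\MMS{d}{i}(M)|$ is exactly the optimal makespan of scheduling the chore-costs on $d$ machines, so to prove $v_i(A_i)\ge \MMS{d}{i}(M)$ it suffices to show that \emph{every} partition of $M$ into $d$ bundles has some bundle of cost at least $C$ to agent $i$.

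For the positive direction I fix $i$, set $d:=\floor{(n+1)/2}$, and split on $|A_i|$. If $|A_i|\le 1$, then $C$ is at most the cost of a single chore, hence at most the cost of the globally largest chore; as any $d$-partition must place that chore in some bundle, its makespan is at least $C$ and we are done. If $|A_i|\ge 2$, then all its chores cost at least $s$, so $C\ge 2s$. Summing the EFx inequalities over the $n-1$ agents $j\ne i$ and adding $|v_i(A_i)|=C$ gives
\[
|v_i(M)|=\sum_{j\in N}|v_i(A_j)|\ \ge\ C+(n-1)(C-s)\ =\ nC-(n-1)s .
\]
Now suppose for contradiction that some $d$-partition $(P_1,\dots,P_d)$ has every bundle of cost strictly less than $C$; summing yields $|v_i(M)|<dC$. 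Combining the two bounds gives $(n-d)C<(n-1)s$, and since $d\le (n+1)/2$ implies $n-d\ge (n-1)/2>0$, we obtain $C<\frac{(n-1)s}{n-d}\le 2s$, contradicting $C\ge 2s$. Hence every $d$-partition has a bundle of cost at least $C$, which is exactly the claim for $d=\floor{(n+1)/2}$ (equal to $(n+1)/2$ when $n$ is odd).

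For tightness I would exhibit, for odd $n$, an EFx allocation violating $1$-out-of-$d$ MMS for every integer $d>(n+1)/2$. Take $n$ agents with identical valuations, two ``large'' chores of cost $W$ (with $W$ large), and $(n-1)W$ ``unit'' chores of cost $1$. Allocate both large chores to agent $1$ and split the unit chores into $n-1$ groups of total cost $W$ among agents $2,\dots,n$. This is EFx: every agent other than $1$ has cost $W$ and envies no one, while agent $1$ (cost $2W$) stops envying after removing either large chore, since $2W-W=W$. However, the MMS partition can place each large chore on its own machine and pad all $d$ machines with unit chores, achieving makespan at most $\ceil{(n+1)W/d}$; since $W\le (n+1)W/d$ for $d\le n+1$ this padding is feasible, and for any integer $d>(n+1)/2$ we have $(n+1)/d<2$, so for $W$ large the makespan falls below $2W=|v_1(A_1)|$, i.e.\ $v_1(A_1)<\MMS{d}{1}(M)$.

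The main obstacle is the positive direction, and specifically getting the constant right: the averaging bound from EFx alone only controls $C$ against $s$ and $|v_i(M)|$, so the argument hinges on the observation that $|A_i|\ge 2$ forces $C\ge 2s$, which is precisely what makes the threshold $d=(n+1)/2$ (where $n-d=(n-1)/2$) yield the contradiction $C<2s$. The case $|A_i|\le 1$ must be peeled off separately because there the averaging argument degenerates; handling it via the largest-chore lower bound on the makespan is the clean fix. In the tightness construction, the only point requiring care is that indivisible unit chores suffice to realize the balanced MMS partition up to the negligible rounding in $\ceil{\cdot}$, which holds because the padding granularity $1$ is much smaller than $W$.
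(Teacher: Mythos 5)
Your proof is correct and is essentially the paper's own argument in different bookkeeping: peeling off the singleton-bundle case and then combining the EFx inequality for the least-costly chore with $C\ge 2s$ and an averaging/proportionality step is exactly how the paper derives $v_1(A_1)\ge \frac{2}{n+1}v_1(M)\ge \MMS{(n+1)/2}{1}(M)$ (there phrased as ``the top two parts of the $(n+1)$-partition $\{c_1\},A_1',A_2,\dots,A_n$''), and your tightness instance (two chores of cost $W$ plus unit chores) is the paper's two-hard-chores example. The only refinement on your side is making the small chores genuinely indivisible and handling the rounding explicitly, where the paper simply treats them as divisible.
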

\begin{proof}
Let $A_1,\ldots,A_n$ be an EFx allocation. 
We focus on agent 1 and prove that $v_1(A_1)\geq MMS^d_1$ for any $d\leq (n+1)/2$ (the proof for other agents is identical).

Let $c_1$ be a chore that maximizes $v_1$ in $A_1$. 
If $A_1$ contains only $c_1$, then we are done, since any $d$-partition must have at least one bundle that contains $c_1$, so its value must be at most $v_1(A_1)$.

Otherwise, let $A_1 = \{c_1\}\cup A'_1$. 
EFx means that $v_1(A'_1)\geq v_1(A_j)$  for all $j\in[n]$.
Since $c_1$ is highest-valued in $A_1$, we have $v_1(c_1)\geq v_1(A'_1)$.
So, in the $(n+1)$-partition of $M$ into $c_1, A_1', A_2,\ldots A_n$, the first two elements $c_1, A_1'$ have the highest value. This means that $v_1(A_1)\geq \frac{2}{n+1} v_1(M) \geq MMS_1^{\frac{n+1}{2}}$.

For tightness, consider an instance with two hard chores of value $-1$, and many easy chores of total value $-n+1$ (for the sake of the proof, we can consider these chores as divisible). The total value of all chores is $-n-1$.
For any $d>(n+1)/2$, we can partition the chores into $d$ bundles of equal value, which is $(-n-1)/d > -2$.
The allocation in which agent 1 gets the two hard chores and each of the other agents gets easy chores of total value $-1$ is EFx, but does not satisfy 1-out-of-$d$ MMS for agent 1.
\end{proof}

\section{Ordinal approximation for mixed items} \label{app:mixed}
When the items can be a mixture of goods and chores, no multiplicative approximation of the MMS is guaranteed to exist \citep{kulkarni2021approximating}. This raises the question of whether an ordinal approximation of the MMS can be guaranteed.
The answer is no.
\begin{proposition}
There is an instance with $n=3$ agents with mixed valuations, in which no allocation is 1-out-of-$d$ MMS, for any positive integer $d$.
\end{proposition}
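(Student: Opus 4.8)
The plan is to exhibit a single mixed instance together with a lower bound on the maximin share that is \emph{uniform in $d$}, so that ruling out $1$-out-of-$d$ MMS for every positive integer $d$ reduces to one combinatorial obstruction rather than a case analysis over $d$. Concretely, I would take $n=3$ and three items $\{g,e_1,e_2\}$, where agents $1$ and $2$ both regard $g$ as a good and $e_1,e_2$ as chores (say $v_1=v_2$ with $v(g)=+2$ and $v(e_1)=v(e_2)=-1$), while agent $3$ regards $g$ as a chore and $e_1,e_2$ as goods (say $v_3(g)=-1$, $v_3(e_1)=v_3(e_2)=+2$). The whole design turns on the fact that for agent $1$ (and symmetrically agent $2$) a bundle has nonnegative value \emph{only if} it contains $g$: any bundle containing $g$ is worth at least $v(g)-v(e_1)-v(e_2)=0$, whereas any nonempty bundle without $g$ consists solely of $e$-chores and is strictly negative.

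The key step I would isolate is a $d$-independent lower bound: since $v_i(M)\ge 0$ for each agent, the $d$-partition that places all items into one bundle and leaves the other $d-1$ bundles empty has minimum value $\min(v_i(M),0)=0$ when $d\ge 2$ (and $v_i(M)$ when $d=1$); hence $\MMS{d}{i}(M)\ge 0$ for \emph{every} $d\ge 1$. Consequently, any $1$-out-of-$d$ MMS allocation must satisfy $v_i(A_i)\ge \MMS{d}{i}(M)\ge 0$ for all $i$, i.e.\ it must give every agent a nonnegative bundle. I would then invoke the scarcity obstruction: the unique good $g$ can be assigned to at most one agent, so at most one of agents $1,2$ can meet the ``contains $g$'' condition, and the other is forced onto a bundle of pure $e$-chores, whose value is negative and therefore below that agent's nonnegative MMS. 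Since this argument never refers to the particular value of $d$, it simultaneously defeats all $d$.

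The main obstacle, and the exact place the earlier flawed reduction to $d\le n$ goes wrong, is guaranteeing that the uniform bound $\MMS{d}{i}(M)\ge 0$ persists for \emph{arbitrarily large} $d$ (for $d$ exceeding the number of items the all-in-one-bundle partition still attains minimum $0$), so that large $d$ needs no separate treatment — it is covered by the same inequality. The remaining work is a finite verification that \emph{every} complete allocation, not only the ``one item each'' ones, leaves some agent with a negative bundle: one checks the few cases according to who receives $g$ and how $e_1,e_2$ are split, confirming in each case that either one of agents $1,2$ holds a lone $e$-chore or the allocation otherwise fails the nonnegativity requirement. I expect this case check to be routine once the uniform MMS lower bound and the ``$g$ is the only route to nonnegativity for agents $1$ and $2$'' property are in place, and these two facts together constitute the whole proof.
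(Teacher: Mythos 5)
Your overall strategy is exactly the paper's: establish the uniform lower bound $\MMS{d}{i}(M) \geq 0$ for all $d$ via the all-items-in-one-bundle partition (the paper gets this from Property~1 of the instance it cites), and then exhibit an instance in which every complete allocation gives some agent strictly negative value (the paper's Property~2). However, your concrete $3$-item instance fails the second half, and the failure sits precisely in the ``routine'' case check you deferred. Consider the allocation $A_1 = \{g\}$, $A_2 = \emptyset$, $A_3 = \{e_1, e_2\}$. Then $v_1(A_1) = 2$, $v_2(A_2) = 0$, and $v_3(A_3) = 4$. You argued that the agent among $\{1,2\}$ who misses $g$ is ``forced onto a bundle of pure $e$-chores,'' but nothing forces this: the $e$-chores are \emph{goods} for agent $3$, who happily absorbs both, and agent $2$ can be left with the empty bundle, whose value $0$ meets the nonnegative threshold. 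Checking thresholds explicitly: for agents $1$ and $2$ one has $v_i(M) = 0$, so $\MMS{d}{i}(M) = 0$ for every $d \geq 1$, and $2 \geq 0$, $0 \geq 0$; for agent $3$, $\MMS{1}{3}(M) = 3$, $\MMS{2}{3}(M) = 1$, and $\MMS{d}{3}(M) = 0$ for $d \geq 3$, all at most $4$. So this allocation is $1$-out-of-$d$ MMS for \emph{every} positive integer $d$, refuting your claimed instance.

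The deeper issue is that for large $d$ (specifically $d > m$), every $d$-partition contains an empty bundle, so $\MMS{d}{i}(M) \leq 0$; combined with your uniform lower bound, the requirement for large $d$ is \emph{exactly} that every agent receive nonnegative value. Hence the entire burden of the proof is constructing a mixed instance where each agent's total value is positive yet \emph{no} allocation gives all agents nonnegative value --- and this is genuinely hard, not a toy exercise: a lone contested good cannot create the obstruction, because the chores can always be routed to agents who value them positively and surplus agents can be parked at the empty bundle. This is why the paper does not construct the instance itself but invokes the $15$-item construction of Kulkarni et al.\ (12 goods-for-everyone, 3 chores-for-everyone, with positive totals but a guaranteed negative-value agent in every allocation). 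To repair your proof you would either need to reproduce a construction of comparable delicacy or cite one; the two properties you correctly identified as sufficient are sound, but your instance does not satisfy the second.
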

\begin{proof}
\citet{kulkarni2021approximating}
show an instance with three agents and 15 items,
where 12 items are goods for everyone and 3 items are chores for everyone. This instance has the following properties:
\begin{enumerate}
\item For each agent, the sum of values of all 15 items is positive.
\item In any allocation, at least one agent gets a negative value. 
\end{enumerate}
Property 1 implies that the 1-out-of-$d$ MMS is at least $0$ for any $d\geq 1$, since one can put all items in one bundle and have $d-1$ empty bundles. 
Property 2 then implies that no allocation satisfies $1$-out-of-$d$ MMS thresholds.
\end{proof}

\section{The Size of an Instance}\label{app:size}
The next lemma states that, without loss of generality, we need only focus on instances with $m \geq 2 d$. Otherwise, an ordered chores instance can be reduced by giving $2d - m$ agents exactly one chore from $\{c_{1}, \ldots, c_{2d-m}\}$, which will be worth at least $\text{MMS}_{i}^{d}(M)$.\footnote{A special case of this lemma with $d=n$ for goods was previously shown in \citep{amanatidis2017approximation}.}


\begin{lemma}\label{lem:mlessthan2n}
Given an ordered chores instance $I = \ins{N, M, V}$ such that $m < 2d$, for each agent $i\in N$,
\begin{align*}
\text{MMS}_{i}^{d-1}(M \setminus \{c_{1}\}) \geq \text{MMS}_{i}^{d}(M).
\end{align*} 
\end{lemma}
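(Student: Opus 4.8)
The plan is to show that removing the single largest chore $c_1$ and reducing the number of bundles from $d$ to $d-1$ does not decrease the maximin share value. The key structural observation is that when $m < 2d$, any optimal $d$-partition achieving $\text{MMS}_i^d(M)$ must contain at least one bundle that is a singleton or even empty, simply by a counting argument: with fewer than $2d$ chores distributed among $d$ bundles, the average bundle size is below $2$, so by pigeonhole at least one bundle contains at most one chore.

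First I would fix an agent $i$ and take an optimal $d$-partition $(A_1,\ldots,A_d)$ witnessing $\text{MMS}_i^d(M)$, so that $\min_j v_i(A_j) = \text{MMS}_i^d(M)$. Since $m < 2d$, by the pigeonhole principle at least one bundle, say $A_\ell$, satisfies $|A_\ell| \leq 1$. I would then locate the bundle containing the largest chore $c_1$, say $A_p$ with $c_1 \in A_p$. The goal is to construct a valid $(d-1)$-partition of $M \setminus \{c_1\}$ whose minimum value is at least $\text{MMS}_i^d(M)$.

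The construction proceeds by cases. If the singleton/empty bundle $A_\ell$ is exactly the bundle $\{c_1\}$, then deleting it yields a $(d-1)$-partition of $M \setminus \{c_1\}$ whose bundles are unchanged, so each still has value $\geq \text{MMS}_i^d(M)$, giving the result immediately. Otherwise, $A_\ell \neq \{c_1\}$; here I would remove $c_1$ from its bundle $A_p$ (which weakly increases $v_i(A_p)$, since $v_i(c_1) \leq 0$) and then merge the small bundle $A_\ell$ into $A_p$. Because chores have non-positive value, merging can only decrease value, so I must argue the merged bundle is still large enough. The crucial point is that $A_\ell$ contributes at most a single chore $c$, and removing $c_1$ from $A_p$ created ``room'': I would use the ordering of chores to compare $|v_i(c)|$ against the value removed. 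Specifically, since $c_1$ is the largest chore, $v_i(A_p) - v_i(c_1) \geq v_i(A_p)$ and the absorbed chore $c$ satisfies $v_i(c) \geq v_i(c_1)$, so the net value of the combined bundle is $v_i(A_p) - v_i(c_1) + v_i(c) \geq v_i(A_p) \geq \text{MMS}_i^d(M)$. All other bundles are untouched, so the resulting $(d-1)$-partition of $M\setminus\{c_1\}$ witnesses $\text{MMS}_i^{d-1}(M\setminus\{c_1\}) \geq \text{MMS}_i^d(M)$.

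The main obstacle I anticipate is handling the bookkeeping cleanly when the bundle $A_\ell$ being merged and the bundle $A_p$ containing $c_1$ interact—in particular ensuring the inequality $v_i(c) \geq v_i(c_1)$ is used in the right direction and that the removed chore $c_1$ genuinely frees up enough value to absorb the extra chore from $A_\ell$. The ordering assumption $|v_i(c_1)| \geq |v_i(c_2)| \geq \cdots$ is exactly what makes this work, since it guarantees $c_1$ is the worst chore, so swapping it out for any other chore is a weak improvement. I would make sure to treat the edge case where $A_\ell$ is empty (then merging adds nothing and the argument only gets easier) and the case $p = \ell$ consistently with the first case above.
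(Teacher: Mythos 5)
Your proposal is correct and follows essentially the same route as the paper's proof: both use the pigeonhole argument ($m < 2d$ gives a bundle with at most one chore), both exploit the ordering to get $v_i(c) \geq v_i(c_1)$, and your ``remove $c_1$ then merge the small bundle'' construction produces exactly the same $(d-1)$-partition as the paper's ``swap $c_1$ with the lonely chore, then delete the singleton $\{c_1\}$.'' The only (cosmetic) improvement is that you explicitly handle the case where the small bundle is empty, which the paper glosses over.
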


\begin{proof}
Let $I = \ins{N, M, V}$ be an ordered instance. Then, $c_1$ is the largest (least preferred) chore  for all agents. We show that when $m < 2d$, chore $c_{1}$ can always be in a bundle by itself, and thus, taking the remaining $d-1$ bundles forms a $\text{MMS}_{i}^{d-1}(M \setminus \{c_{1}\})$ partition which is worth at least $\text{MMS}_{i}^{d}(M)$.

Consider any $\text{MMS}_{i}^{d}(M)$ partition $(A_{1}, \ldots, A_{d})$. If $c_{1}$ is already in a bundle by itself, then by definition we have that $\text{MMS}_{i}^{d-1}(M \setminus \{c_{1}\}) \geq \text{MMS}_{i}^{d}(M)$.
Consider the case that $c_{1}$ is in bundle $A_{j}$. Since $m < 2d$, there must be some bundle of any $\text{MMS}_{i}^{d}(M)$ partition which contains only one chore. Let $c_k$ denote one such lonely chore.
Because the instance is ordered $v_{i}(c_{1}) \leq v_{i}(c_{k})$. 
Since each additional chore in $A_{j}$ has non-positive value,  $v_{i}(A_{j}) \leq v_{i}(c_{1}) \leq v_{i}(c_{k})$. 
Swap $c_{1}$ and $c_{k}$ to form bundles $A_{j}' = (A_{j} \setminus \{c_{1}\}) \cup \{c_{k}\}$ and $\{c_{1}\}$. 
Since $v_{i}(c_{1}) \leq v_{i}(c_{k})$ we have  $v_{i}(A_{j}') = v_{i}(A_{j}) - v_{i}(c_{1}) + v_{i}(c_{k}) \geq v_{i}(A_{j})$. Likewise $v_{i}(c_{1}) \geq v_{i}(A_{j})$. Thus after swapping $c_{1}$ and $c_{k}$, the value of the minimum bundle does not decrease and $c_{1}$ is in a bundle by itself, implying $\text{MMS}_{i}^{d-1}(M \setminus \{c_{1}\}) \geq \text{MMS}_{i}^{d}(M)$.
\end{proof}

\end{document}